\newtheorem{theorem}{Theorem}
\newtheorem{algorithm}[theorem]{Algorithm}
\newtheorem{assumption}{Assumption}
\newtheorem{ex}[theorem]{Example}
\newtheorem{proposition}{Proposition}
\newtheorem{remark}{Remark}
\newenvironment{proof}[1][Proof]{\noindent\textbf{#1.} }{\ \rule{0.5em}{0.5em}}
\def\NK{{\lvert \mathcal{K} \rvert}}
\def\NL{{\lvert \mathcal{L} \rvert}}
\def\NE{{\lvert E \rvert}}
\def\NL{{\lvert \mathcal{L} \rvert}}
\newcommand{\NodeList}{\mathcal{N}}
\def\minimize{\mathop{\rm minimize}}
\def\maximize{\mathop{\rm maximize}}
\def\subto{{\rm subject \mbox{   }\rm to}}
\newcommand{\diag}{\mathop{\mathrm{diag}}}
\newcommand{\erickmodified}[1]{{\color{blue}#1}}
\newcommand{\erickcomment}[1]{\erickmodified{red}}
\newcommand{\utsavmodified}[1]{{\color{}#1}}
\newcommand{\utsavcomment}[1]{\utsavmodified{}}
\newcommand{\V}[1]{{{\boldsymbol #1}}}
\newcommand{\BF}[1]{{\V{#1}}}
\newcommand{\BFx}{\V{x}}
\newcommand{\BFz}{\V{z}}
\newcommand{\BFlambda}{\V{\lambda}}
\newcommand{\BFell}{\V{\ell}}
\newcommand{\BFs}{\V{s}}
\newcommand{\BFeta}{\V{\eta}}
\newcommand{\BFgamma}{\V{\gamma}}
\newcommand{\BFpsi}{\V{\psi}}
\newcommand{\BFsigma}{\V{\sigma}}
\newcommand{\BFvarphi}{\V{\varphi}}
\newcommand{\BFUpsilon}{\V{\Upsilon}}
\newcommand{\BFupsilon}{\V{\upsilon}}
\newcommand{\BFbeta}{\V{\beta}}
\newcommand{\BFd}{\V{d}}
\newcommand{\BFdelta}{\V{\delta}}
\newcommand{\BFh}{\V{h}}
\newcommand{\BFy}{\V{y}}
\newcommand{\BFc}{\V{c}}
\newcommand{\BFw}{\V{w}}
\newcommand{\BFI}{I}
\newcommand{\BFq}{\V{q}}
\newcommand{\BFDelta}{\V{\Delta}}
\newcommand{\BFu}{\V{u}}
\newcommand{\X}{\mathcal{X}}
\newcommand{\BFe}{\V{e}}
\newcommand{\BFzero}{\V{0}}
\newcommand{\BFone}{\V{1}}
\newcommand{\mymbox}[1]{\mbox{\scriptsize #1}}
\renewcommand{\Re}{\mathbb{R}}
\newcommand{\quoteIt}[1]{``#1''}
\newcommand{\Q}{\mathcal{Q}}
\newcommand{\LL}{\mathcal{L}}
\newcommand{\I}{\mathcal{I}}
\newcommand{\Z}{\mathcal{Z}}
\newcommand{\K}{\mathcal{K}}
\newcommand{\Indic}{\mathbbm{1}}
\def\Expect{{\mathbb E}}
\def\Prob{{\mathbb P}}
\def\NK{{\lvert \mathcal{K} \rvert}}
\def\NE{{\lvert E \rvert}}
\pgfplotsset{width=7.5cm,compat=1.12}
\pgfplotsset{every axis/.append style={
                     tick label style={font=\footnotesize}  
                    }}
\title{The value of randomized strategies in distributionally robust risk averse network interdiction games}
\author{
  Utsav Sadana \\
 Department of Decision Sciences\\ HEC Montr\'eal\\ Montr\'eal H3T 2A7, Canada\\
  \texttt{utsav.sadana@hec.ca} \\
   \And
 Erick Delage \\
 Department of Decision Sciences\\ HEC Montr\'eal\\ Montr\'eal H3T 2A7, Canada\\
  \texttt{erick.delage@hec.ca}
}
\begin{document}
\maketitle

\begin{abstract}
Conditional Value at Risk (CVaR) is widely used to account for the preferences of a risk-averse agent in the extreme loss scenarios.  To study the effectiveness of randomization in interdiction games with an interdictor that is both risk and ambiguity averse, we introduce a \textit{distributionally robust network interdiction game}  where the interdictor randomizes over the feasible interdiction plans in order to minimize the worst-case CVaR of the flow with respect to both the  unknown distribution of the capacity of the arcs and  his mixed strategy over interdicted arcs.  The flow player, on the contrary, maximizes the total flow in the network.  By using the budgeted uncertainty set, we control the degree of conservatism in the model and reformulate the interdictor's non-linear problem as a bi-convex optimization problem. For solving this problem to any given optimality level,  we devise a spatial branch and bound algorithm  that uses the McCormick inequalities and reduced reformulation linearization technique (RRLT) to obtain convex relaxation of the problem. We also develop a  column generation algorithm to identify the optimal support of the convex relaxation which is then used in the coordinate descent algorithm to determine the upper bounds. The efficiency and convergence of the spatial branch and bound algorithm is established in the numerical experiments. Further, our numerical experiments show that randomized strategies can have significantly better in-sample and out-of-sample performance than optimal deterministic ones. 
 
\end{abstract}
\keywords{Conditional Value at Risk \and Distributionally Robust 
Optimization \and
interdiction game \and column generation}
 \section{Introduction} 
\label{sec:introduction}

 Game-theoretic models are increasingly being used to determine the best practice in security policy, e.g., ARMOR program in Los Angeles International Airport, see \cite{Pita2008},  \cite{Pita2009}, \cite{Kar2017}. Yet, in many real-world applications, the parameters of an agent's decision model, e.g., the  capacity of the arcs in a network interdiction game, can be undetermined. It was shown in \cite{Ben-Tal2000} that  perturbations in the parameters of a linear programming problem can render the solution to become infeasible or significantly suboptimal.  A popular technique to tackle this issue is Stochastic Programming (SP)  which assumes that the distribution of the parameters is known. 
However, in many instances, the distribution of the random parameters is not exactly known. This can lead to  post-decision disappointment referred to as the \textit{optimizer's curse} (see \cite{Smith2006}). Alternatively, classical robust optimization models do not use any distributional information on the random parameters. However, imposing that only the worst-case outcome is to be taken into account can often lead to over-conservative solutions.

 Distributionally Robust Optimization (DRO) acts as a compromise between SP and RO by requiring the probability distribution to lie in a distributional ambiguity set
 and seeking a solution that performs best according to the worst-case distribution. Hence, when the ambiguous distribution set is properly tuned, it can prevent both the post-decision disappointment of SP models and the over-conservatism of RO models.  Furthermore, the set can be calibrated in ways that will provide statistical guarantees on the out-of-sample performance of the DRO solution. Recently, it was shown in \cite{Delage2019} that it can be beneficial to employ randomization in non-convex DRO problems. For an ambiguity-averse risk-neutral decision maker, i.e. one that minimizes worst-case expected value, \cite{Delage2018} proposed an algorithm to identify such strategies in mixed-integer two-stage DRO problems. \cite{Bertsimas2016} also studied the value of randomization specifically in the context of a network interdiction game with known parameters and risk-neutral agents. Nevertheless, none of these works provide either theoretical or computational means of identifying optimal randomized solutions for agents that employ more general risk measures than expected value, such as the Conditional Value at Risk (CVaR) measure introduced in \cite{Rockafellar2000}. Such risk measures are especially relevant in security policy model's such as network interdiction problems where a decision maker, e.g., a law-enforcement agency, might be concerned by the possibility of incurring huge losses under certain scenarios, e.g. caused by a large flow of illegal drugs, weapons or money.

In this paper, we study a \textit{distributionally robust network interdiction game} where the interdictor employs a CVaR risk measure to model his risk aversion.
Our contributions can be summarized as follows: 
 \begin{itemize}
   \item  On the methodological side, we introduce for the first time ambiguity and risk aversion in  network interdiction games where the interdictor minimizes the worst-case CVaR over both the unknown distribution of the capacities of the arcs and the distribution of interdicted arcs. This is in sharp contrast with the work of \cite{Loizou2015} who considers an interdictor that employs CVaR to handle parameter uncertainty but an expected value to handle the uncertainty caused by his own randomized strategy. We show that the approach in \cite{Loizou2015} can in fact produce a solution that is stochastically strictly dominated by the solution of our proposed model.
\item On the algorithm side, we complement the work of  \cite{Delage2018} by designing the first algorithm that can identify optimal randomized strategies for an ambiguity averse risk averse agent, i.e. an agent that minimizes a worst-case convex risk measure other than the expected value. Our algorithm is based on a spatial branch and bound scheme (see \cite{Al-Khayyal1983}) embedded with the column generation (CG)  method. It will successfully identify high quality randomized solution for networks containing hundreds of nodes in a few minutes. We expect that these results should pave the way to developing algorithms for solving non-zero sum Stackelberg games where players use a CVaR risk measure, and are ambiguity-averse.

\item On the empirical side, we provide evidence which indicates that a network interdictor can significantly benefit from randomization. We find that, while it is true that deterministic plans are often worst-case CVaR optimal, in instances where randomization strictly improves this objective, the improvement is significant, i.e. a 4\% improvement on average. Furthermore, for  those case, the out-of-sample performance of the randomized strategy also always strictly improves the out-of-sample performance and by as much as 19\% on average. 

\end{itemize} 

The rest of the paper is organized as follows. Section \ref{sec:RL} gives an overview of the literature covering network interdiction games and DRO problems which are closely related to this work. We also briefly discuss the algorithms which have been proposed previously to solve non-convex optimization problems with bilinear constraints.  Section \ref{sec:RDRNI} defines our \textit{distributionally robust network interdiction game}. The robust counterpart of the DRO model is given in Section \ref{sec:sBB} where we also describe the  spatial branch and bound algorithm embedded with the CG algorithm that is used to solve the interdictor's problem to global optimality.  Numerical experiments are provided in  Section \ref{sec:simulate} to demonstrate the convergence and efficiency of our algorithm and illustrate that randomization can significantly improve both in-sample and out-of-sample performance. Concluding remarks are given in Section \ref{sec:Conclusions}.

\section*{Notations}
Vectors are expressed in bold and matrices are represented by capital letters. $\BFone$ and $\BFzero$ denote column vectors of $1$'s and $0$'s  respectively. The identity matrix is denoted by $\BFI$, and $\BFe_i$ captures its $i$-th column.  The set of all probability measures on a finite discrete  measurable space $(\X, F_{\X})$ is denoted by $\Delta \X \subseteq \Re_{+}^{\lvert \X \rvert}$, where $F_{\X}$ denotes all subsets of $\X.$
\section{Related Literature}\label{sec:RL}

It is well-known that the illegal flow of  drugs, weapons or  other hazardous substances poses a threat to the security of a nation, see \cite{Magliocca2019} and the references in it. The law-enforcement agencies aim to reduce their flow while the adversaries, which may be smugglers or terrorist organizations, try to increase it. Network flow problems  arise in diverse areas like transportation (\cite{Israeli2002}), military  manpower planning (\cite{Gass1991}), medicine (\cite{Assimakopoulos1987}). For an elaborate description of theory, algorithms and applications of network flow problems, refer to \cite{Ahuja1993}. One of their useful applications is in determining the optimal interdiction policies for a network operator. Numerous network flow models have been developed to study the network interdiction game where the defender interdicts a set of arcs on the network to minimize the flow while the adversary maximizes it, see \cite{Wood1993}, \cite{Cormican1998}, \cite{Smith2013}, \cite{Smith2019}).

In \cite{Jain2010}, it is shown that randomization can be useful in  security applications like patrolling of airports. Recently, \cite{Bertsimas2016} introduced a randomized network interdiction game  where only the interdictor can randomize. The authors assumed interdictor as well as the flow player to be risk-neutral, and the model parameters are known with certainty. In contrast to their model,  \cite{Lei2018} assumed  deterministic strategies for the players, and the effect of interdiction on the  capacity of each arc is random. Also, the players use CVaR risk measure. In  \cite{Atamturk2017}, the  capacities are assumed to be stochastic, and the interdictor minimizes the maximum flow-at-risk over a discrete set of actions.  

Clearly, the literature on interdiction has been limited to SP models. Also, the benefit of using  randomized strategies for a risk-averse agent in DRO problems has not been explored. \cite{Aghassi2006} have developed robust game theory models where players are expected utility maximizers, and have incomplete information on the true probability distribution of their payoffs.  In order to account for ambiguity aversion in non-cooperative games,  \cite{Loizou2015} has proposed a distributionally robust game theory model where players use worst-case expected CVaR measure to evaluate the performance of their strategies. 
We show, using an example, that a single player counterpart of his model  results in an optimal policy that is strictly  stochastically dominated by the strategy produced by our model. 

Alternatively, we propose the  \textit{distributionally robust network interdiction game} where the interdictor minimizes the worst-case CVaR of the flow with respect to both the unknown distribution of  the capacity of the arcs and his mixed strategy over the feasible interdiction plans. In our model, both the interdictor and flow player can randomize over their respective action sets. Similar to \cite{Janjarassuk2008}, we assume that the flow player can observe the capacity of each arc as well as the interdicted arcs before selecting her optimal strategy. However,  the success of interdiction is a Bernoulli random variable in \cite{Janjarassuk2008} while it is deterministic in our model.
 
 \cite{Hajinezhad2018} and \cite{Hajinezhad2019}) made separability assumptions  on the objective functions to solve non-convex non-smooth optimization problems with bilinear constraints for machine learning and signal processing applications.  We do not make those assumptions on the interdictor's bi-convex  optimization problem and we devise the spatial branch and bound algorithm (see \cite{Al-Khayyal1983}) that iteratively searches the feasible space of the problem to obtain global optimal solutions within a given tolerance. The rate of convergence of the algorithm relies on generating tighter lower and upper bounds.  In the literature, McCormick inequalities (see \cite{McCormick1976}) are commonly used to obtain convex relaxations of the non-convex non-linear programming problems with bilinear constraints.  In order to obtain a tighter bound, Reformulation Linearization Technique (RLT) was proposed in \cite{Sherali1992} wherein valid inequalities obtained by multiplying the pairs of feasible constraints are added to the relaxed problem. This increases the size of LP relaxation. \cite{Liberti2006} proposed the reduced RLT, which in contrast to RLT, gives an exact reformulation of original non-convex problem with an additional number of linear equality constraints. The authors also show that RRLT combined with McCormick inequalities can result in tighter relaxations than applying McCormick directly to the bilinear terms.

\section{ Distributionally robust network interdiction game}\label{sec:RDRNI}
 Consider a directed graph $G=(V,E)$, where V and E denote the nodes and arcs, respectively.  Let $e=(i,j)$ represent an arc on G, $\delta^-(i)=\{(i,j)\rvert j \in V \}$ and $\delta^+(i)=\{(j,i)\rvert j \in V \}$ denote, respectively, the set of arcs leaving and entering node $i \in V$.   A flow in the graph G is denoted by a non-negative vector $\BFx \in \Re_+^{\NE}$ so that for each $e \in E$,  we have $x_e \leq c_e,$ where the  capacity of all arcs is denoted by $\BFc \in \Re_+^{\NE}$. The conservation of flow at each node is ensured by
 \begin{align}
    & \sum_{e \in \delta^-(i)}x_e-\sum_{e \in \delta^+(i)}x_e =0 && \forall i \not\in \{s,t\}, \label{Flow conservation}
 \end{align}
 where $s$ and $t$ denote the source and sink node, respectively. The flow player aims at maximizing the flow in the network.
The interdictor, on the other hand, aims at minimizing the worst-case CVaR of the flow with respect to both the unknown distribution of the  capacity of the arcs and his mixed strategy over the interdicted arcs.  We assume that the interdictor has a budget to remove $B$ arcs in the network. Let $\LL$ denote the finite set of feasible plans for the interdictor where $\LL=\{\BFell \in \{0,1\}^{|E|}\rvert \sum_{e} \ell_e \leq B\}$. Here, $\ell_e=1$ if the interdictor removes arc $e$ and $\ell_e=0$ if arc $e$ is not interdicted. 
 The distribution of the capacities of all the  arcs is only known to lie in set  $\Q$. We assume that the distribution is discrete with a set of scenarios $\K$ supported on $\{\BFc^{k}\}_{k \in \K}$. 
 
 Similar to the model in \cite{Bertsimas2016},
 we assume that the randomized strategy  of the interdictor is a probability distribution $\BFu$ over the  set $\LL$ where $\BFu \in \Delta \LL$.    For any interdiction plan
$\BFell$ and scenario $k$, the flow player solves the following problem
\begin{subequations}\label{eq:Primal_flowP}
 \begin{align}
     f_{\BFell, k} :=    \maximize_{\BFx} \quad &\BFd^T \BFx \\
      \subto  \quad & N \BFx=0  \label{eq:flow_balance_Incidence}\\
     \quad & 0 \leq \BFx \leq C^k (\BFone-\BFell), \label{eq:PrimFlow_Capacity}
 \end{align}
 \end{subequations}
where $\BFd^T\BFx=\sum_{e \in \delta^+(t)}x_e$,  $C^k=\diag(\BFc^k)$, and \eqref{eq:flow_balance_Incidence} is a shorthand notation for constraint \eqref{Flow conservation}.

 The interdictor solves the following distributionally robust network interdiction (DRNI) problem:
\begin{align}
\mbox{(DRNI)\;\;}\minimize_{\BFu\in \Delta \LL} \max_{\BFq\in \Q} \mbox{CVaR}^\alpha_{ \BFell\sim \BFu,k\sim \BFq}[ f_{\BFell, k}], \label{Network_equilibrium}
\end{align}
where CVaR is defined over the joint distribution of capacities and  interdicted arcs. Namely, with risk aversion parameter $\alpha=[0,1)$, CVaR is defined as
\begin{align*}
  \mbox{CVaR}^\alpha_{ \BFell\sim \BFu,k\sim \BFq}[ f_{\BFell, k}]:=  \inf_{\zeta} \; \zeta + \frac{1}{1-\alpha}\sum\limits_{\BFell}\sum\limits_{k}q_k u_\BFell[f_{\BFell, k}- \zeta]^+,
\end{align*}
where $[f_{\BFell, k}- \zeta]^+:=\max (f_{\BFell, k}- \zeta,0).$

\begin{remark}
 Since the set $\X:=\{\BFx\, \rvert\,  N\BFx=0,\; 0\leq\BFx\leq C^k(\BFone-\BFell)\}$ of all possible s-t flows is convex, randomization is not beneficial for the flow player if he considers minimizing a convex risk measure of $\BFd^T\BFx$, such as,  $\min_{\BFu_x \in \Delta \X}\mbox{CVaR}_{\BFx\sim \BFu_x}^{\,\alpha}[-\BFd^T\BFx]$ (see \cite{Delage2019}).
\end{remark}
 
Under this setting, one can show that the interdictor's DRNI problem can be cast as a bi-convex DRO problem.

\begin{proposition}\label{prop:interdictorDRO}
When $\mathcal{Q}$ is a convex set, the interdictor's DRNI problem
\eqref{Network_equilibrium} is equivalent to the following bi-convex DRO problem
   \begin{subequations}\label{epi}
  \begin{align}
\mathop{\minimize_{\BFu,\zeta, \BFDelta,t,\BFeta}} \quad & t \\
 \text{subject to } \quad &
     \zeta + \frac{1}{1-\alpha}\sum\limits_{\BFell}\sum\limits_{k}{q}_{k} \Delta_{\BFell,  k}\leq t \quad && \forall \BFq \in \Q \label{eq:DRO_non_linear_constraint}\\
 \quad &   \Delta_{\BFell,  k} \geq u_\BFell f_{\BFell, k}-\eta_{\BFell} && \forall \BFell  \in \LL,\, k \in \K   \label{eq:DRO_Delta}\\
 \quad & \eta_{\BFell} =u_\BFell \zeta && \forall \BFell \in \LL \label{eq:DRO_eta}\\
 \quad &  \Delta_{\BFell,  k} \geq 0  &&\forall \BFell  \in \LL,\, k \in \K \label{eq:DRO_Delta_positive}\\
   \quad & \BFu \geq 0 \label{eq:DRO_u_positive} \\
  \quad & \BFone^T \BFu=1  && \label{eq:DRO_u_sum_one}\\
  \quad & 0 \leq \zeta \leq \bar{\zeta},  && \label{eq:DRO_zeta_bound}
  \end{align}
  \end{subequations}
  where $\bar{\zeta}:=\max_{k\in\K} f_{\BFzero, k}$.
\end{proposition}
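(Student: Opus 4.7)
My plan is to unfold the definition of CVaR, justify that the inner infimum over $\zeta$ can be lifted to the outer minimization via a minimax argument, and then linearize the remaining bilinearity $u_{\BFell}[f_{\BFell,k}-\zeta]^+$ using the auxiliary variables $\BFDelta$ and $\BFeta$.

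Substituting the definition of CVaR into \eqref{Network_equilibrium} yields
\begin{align*}
\min_{\BFu\in\Delta\LL}\;\max_{\BFq\in\Q}\;\inf_{\zeta\in\Re}\;\zeta+\frac{1}{1-\alpha}\sum_{\BFell,k} q_k\,u_\BFell\,[f_{\BFell,k}-\zeta]^+.
\end{align*}
I would then argue that without loss of generality $\zeta$ lies in the compact interval $[0,\bar\zeta]$: values $\zeta<0$ are dominated by $\zeta=0$ since $f_{\BFell,k}\ge 0$, while $\zeta>\bar\zeta$ makes every positive part vanish and is therefore dominated by $\zeta=\bar\zeta$, using the fact that $f_{\BFell,k}\le f_{\BFzero,k}\le\bar\zeta$ (removing more arcs, i.e.\ increasing $\BFell$, can only shrink the feasible set of \eqref{eq:Primal_flowP} and hence only decrease the maximum flow).

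Next, observing that the integrand is convex in $\zeta$ and linear (hence concave) in $\BFq$, with $[0,\bar\zeta]$ compact convex and $\Q$ a closed convex subset of the finite-dimensional simplex (therefore compact), Sion's minimax theorem justifies exchanging $\inf_\zeta$ and $\max_\BFq$; the two infima then collapse into a single minimization, leaving
\begin{align*}
\min_{\BFu\in\Delta\LL,\,\zeta\in[0,\bar\zeta]}\;\max_{\BFq\in\Q}\;\zeta+\frac{1}{1-\alpha}\sum_{\BFell,k} q_k\,u_\BFell\,[f_{\BFell,k}-\zeta]^+.
\end{align*}
An epigraph reformulation of the inner maximum with epigraphic variable $t$ produces the semi-infinite constraint \eqref{eq:DRO_non_linear_constraint}, provided we introduce $\Delta_{\BFell,k}$ as an upper bound on the bilinear term $u_\BFell[f_{\BFell,k}-\zeta]^+$. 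Because $q_k\ge 0$, minimizing $t$ drives each $\Delta_{\BFell,k}$ down to exactly $u_\BFell[f_{\BFell,k}-\zeta]^+$ at optimality.

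Finally I would linearize this upper bound. Setting $\eta_\BFell:=u_\BFell\zeta$ (constraint \eqref{eq:DRO_eta}), the nonnegativity $u_\BFell\ge 0$ gives $u_\BFell[f_{\BFell,k}-\zeta]^+=[u_\BFell f_{\BFell,k}-\eta_\BFell]^+$, and this pointwise maximum is encoded exactly by the two inequalities \eqref{eq:DRO_Delta} and \eqref{eq:DRO_Delta_positive} under the minimization pressure just described. Combined with the simplex constraints \eqref{eq:DRO_u_positive}--\eqref{eq:DRO_u_sum_one} and the bound \eqref{eq:DRO_zeta_bound}, this yields system \eqref{epi}; the only nonlinear coupling that remains is the bilinearity $\eta_\BFell=u_\BFell\zeta$ in \eqref{eq:DRO_eta}, which is biaffine in $(\BFu,\zeta)$, so the formulation is bi-convex as claimed. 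I expect the main obstacle to be the minimax step: one must verify that restricting $\zeta$ to a compact set is truly lossless and that the convexity/compactness hypotheses of Sion's theorem genuinely apply to $\Q$, which is where the assumption that $\Q$ is a (closed) convex set enters essentially.
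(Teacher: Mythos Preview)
Your proposal is correct and follows essentially the same route as the paper's proof: substitute the Rockafellar--Uryasev representation of CVaR, restrict $\zeta$ to $[0,\bar\zeta]$ by a monotonicity/domination argument, invoke Sion's minimax theorem to swap $\max_{\BFq}$ and $\inf_{\zeta}$, and then introduce $\Delta_{\BFell,k}$ and $\eta_\BFell=u_\BFell\zeta$ to obtain the epigraph form. The only minor remark is that you assume $\Q$ is closed (hence compact) for Sion, whereas the paper relies only on compactness of $[0,\bar\zeta]$ together with convexity of $\Q$; either is sufficient, so this does not affect correctness.
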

\begin{proof}
See Appendix \ref{proof:prop:interdictorDRO}.
\end{proof}

The above problem is non-convex due to the bilinear terms $u_\BFell \zeta.$ A second difficulty resides in having to compute $f_{\BFell, k}$ for each scenario $k \in \K$ and interdiction plan $\BFell \in \LL$ in order to solve \eqref{epi}. This will be addressed algorithmically in Section \ref{sec:sBB}.

In the following example, we show that the distributionally robust model proposed in \cite{Loizou2015} identifies interdiction strategies which are strictly stochastically dominated by other feasible strategies. This indicates that Loizou's approach is not well motivated for this class of problems.
\begin{ex}
Consider an agent trying to  reduce the maximum flow from point $s$ to point $t$ which are located on two respective sides of a river. The agent has a budget to interdict two routes. There are three routes available, $e \in\{ T1,\;T2,\;B\}$, where routes $T1$ and $T2$ use two different tunnels to pass the river, while route $B$ uses a bridge to do so. In normal traffic conditions, the capacities are, $\tau$, $\tau$, and $\epsilon$, with $\frac{2\tau}{3}<\epsilon<\tau$, for routes $T1$, $T2$, and $B$ respectively. Unfortunately, all three routes are susceptible to congestion on the day of interest. In the case of $T1$ and $T2$, it is known that the city is planning to do some repairs, which would decrease the flow by $\delta$, using one repair team but no information is available regarding which tunnel it will be; the identity of the selected tunnel is denoted by $r\in\{ 1,\;2 \}$. On the other hand, there is also a weather forecast that predicts $50\%$ chance of snow fall which would create the same decrease in flow, $\delta$, on the bridge used by route $B$. We let $\delta$ satisfy $2(\tau-\epsilon)<\delta\leq\epsilon$. The flow player wants to maximize the flow from point $s$ to point $t$. 

Let the set of feasible interdiction plans be defined as follows:
\begin{align*}
    \LL = \{ \{T1,B\},\{T2,B\},\{T1,T2\},\{T1\},\{T2\},\{B\}, \{\varnothing\}\}.
\end{align*}
The possible scenarios $k=\{1,2,3,4\}$ for the capacity of the three routes $T1$, $T2$ and $B$ are given by:
    \begin{align*}
    \BFc^1=    \begin{bmatrix}
        \tau-\delta\\\tau\\\epsilon-\delta
        \end{bmatrix},\;
       \BFc^2=        \begin{bmatrix}
        \tau\\\tau-\delta\\\epsilon-\delta
        \end{bmatrix},\;
       \BFc^3=        \begin{bmatrix}
        \tau-\delta\\\tau\\\epsilon
        \end{bmatrix},\;
        \BFc^4=       \begin{bmatrix}
        \tau\\\tau-\delta\\\epsilon
        \end{bmatrix},
    \end{align*}
    with respective probabilities $q_1,q_2,q_3,q_4,$ such that $q_1+q_2=0.5$ and $q_3+q_4=0.5$.

Here are the numerical details regarding $f(\BFell,k)$ which denotes the total flow when interdiction plan $\BFell$ is chosen, and scenario $k$ is realized:
\begin{align*}
f(\BFell,k):=\left\{\begin{array}{cl}\tau&\mbox{ if  $\BFell= \{T1,B\}$ and $k\in\{1,3\}$},\\\tau-\delta&\mbox{ if $\BFell= \{T1,B\}$ and $k\in\{2,4\}$},\\\tau-\delta&\mbox{ if  $\BFell= \{T2,B\}$ and $k\in\{1,3\}$},\\\tau&\mbox{ if $\BFell= \{T2,B\}$ and $k\in\{2,4\}$},\\\epsilon-\delta&\mbox{ if $\BFell=\{T1,T2\}$ and $k\in\{1,2\}$},\\\epsilon&\mbox{ if $\BFell=\{T1,T2\}$ and $k \in \{3,4\}$},\\\tau+\epsilon-\delta&\mbox{ if $\BFell=\{T1\}$ and $k\in\{1,4\}$},\\\tau+\epsilon-2\delta&\mbox{ if $\BFell=\{T1\}$ and $k = 2$},\\\tau+\epsilon&\mbox{ if $\BFell=\{T1\}$ and $k = 3$},\\\tau+\epsilon-2\delta&\mbox{ if $\BFell=\{T2\}$ and $k=1$},\\\tau+\epsilon-\delta&\mbox{ if $\BFell=\{T2\}$ and $k \in\{2,3\}$},\\\tau+\epsilon&\mbox{ if $\BFell=\{T2\}$ and $k = 4$},\\2\tau-\delta&\mbox{ if $\BFell=\{B\}$ and $k\in\{1,2,3,4\}$},\\2(\tau-\delta)+\epsilon&\mbox{ if $\BFell=\{\varnothing\}$ and $k \in \{1,2\}$},\\2\tau-\delta+\epsilon&\mbox{ if $\BFell=\{\varnothing\}$ and $k \in \{3,4\}$}.\end{array}\right.\,
.\end{align*}
Consider two potential strategies: 
\begin{align*}
    &\BFu^L=(0.5,0.5,0,0,0,0,0),& &\BFu^{SD}=(0,0,1,0,0,0,0),
\end{align*}
where $\BFu^L$ denotes the strategy to interdict routes $(T1,B)$ with 50\% probability and routes $(T2,B)$ with  $50\%$ probability; $\BFu^{SD}$ denotes the strategy to interdict routes $(T1,T2)$ with probability $1$.

Our robust risk-averse approach $g^{SD}(\BFu):= \sup_{\BFq\in\Q}\mbox{CVaR}^\alpha_{\BFell\sim \BFu, k \sim \BFq}[f(\BFell,k)]$, with $\alpha\geq 50\%$  leads to the following evaluation:
\begin{align*}
g^{SD}(\BFu^L)&=\tau &\mbox{v.s.}&&g^{SD}(\BFu^{SD})&=\epsilon.
\end{align*}
Since $\epsilon<\tau$, we get that $\BFu^{SD}$, interdicting both routes $T1$ and $T2$, is optimal.

Alternatively, the approach in \cite{Loizou2015} can be summarized as minimizing $g^L(\BFu):=\\ \sup_{\BFq\in\Q}\mbox{CVaR}^\alpha_{k\sim\BFq} [\Expect_{\BFell\sim\BFu}[f(\BFell,k)]]$
and leads to the following evaluation
\begin{align*}
g^{L}(\BFu^L)&=\tau-\delta/2 &\mbox{v.s.}&&g^L(\BFu^{SD})&=\epsilon.
\end{align*}
Since $\delta>2(\tau-\epsilon)$, the optimal decision in this case is to implement $\BFu^{L}$, i.e. interdicting $\{T1,B\}$ with $50\%$ probability and  $\{T2,B\}$ with $50\%$ probability, is optimal. Yet, it is clear from a purely statistical point of view that $\BFu^{SD}$ strictly stochastically dominates $\BFu^L$ for all $\BFq \in \Q$. Specifically, we have that
\begin{align*}
 \forall t\in\Re,\;  \Prob_{\BFell \sim \BFu^L, k\sim \BFq}(f(\BFell,k)\geq t)&=0.5 \Indic_{t \leq\tau}+0.5\Indic_{t\leq \tau-\delta}\\
 &\geq 0.5\Indic_{t\leq\epsilon}+0.5\Indic_{t \leq\epsilon-\delta}= \Prob_{\BFell\sim\BFu^{SD}, k\sim \BFq }(f(\BFell,k)\geq t) \,,
\end{align*}
where $\Indic_{x}$ denotes the indicator function of set $x$, and where inequality is strict when $\epsilon<t\leq\tau$ or
$\epsilon-\delta<t\leq\tau-\delta$.
\end{ex}

\section{Solving the DRNI  problem}\label{sec:sBB}
In this section, we propose a numerical scheme for solving the DRNI problem. It is well-known that the tractability of a robust optimization problem depends on the structure of the uncertainty set (see \cite{Ben-Tal2008} and  \cite{Ben-Tal2015}). In particular, for simplicity of exposition, we will focus on the case where the distribution ambiguity set contains perturbed versions of a reference distribution $\hat{\BFq}\in \Delta \mathcal{K}$.

\begin{assumption}
Let $\Q$ be defined as follows:
   \begin{align*}
  \Q:= \{\BFq \in \Re^{\NK} \ \rvert \exists \BFz \in \Z(\Gamma),\, \ \BFq\geq 0, \sum\limits_{k=1}^{\NK} q_{k}=1, \BFq= \hat{\BFq} + \diag(\bar{\BFq})\BFz\},
  \end{align*}
  where $\hat{\BFq} \in\Delta \mathcal{K}$ is a reference distribution, e.g., $\hat{\BFq}=\frac{1}{\NK}\BFone$, while $\bar{\BFq}$ models the magnitude of potential perturbations. Moreover, the set of perturbations $\Z$ refers to the following \quoteIt{budgeted uncertainty set}:
\begin{align*}
\Z(\Gamma) = \{\BFz\in \Re^{\NK}|-1 \leq \BFz \leq 1, \sum\limits_{k=1}^{\NK} |z_k| \leq \Gamma\},
  \end{align*}
 where $\Gamma$ denotes the maximum number of terms of $\hat{\BFq}$ that can be perturbed.
 \end{assumption}

The budgeted uncertainty set introduced in \cite{Bertsimas2004}  provides  the flexibility to study the trade-off between robustness and performance. In \cite{Bertsimas2003} and  \cite{Atamturk2007},  a budgeted uncertainty set  was used to study network flow problems with data uncertainty. For a given budget, the maximum number of parameters which can deviate from their nominal values is controlled with the budget $\Gamma$. A valuable property of the budgeted uncertainty set is that the robust counterpart of a linear constraint is representable in a linear program. This implies that the non-linear bi-convex DRO problem \eqref{epi} can be reformulated as a finite dimensional bi-convex optimization problem.


\begin{proposition}
\label{theorem:robust_counterpart}
The robust counterpart of the interdictor's DRNI problem presented in \eqref{epi} is given by
 \begin{subequations}\label{eq:DRIP:RC}
  \begin{align}
 \minimize_{\scriptsize \begin{array}{c}\BFu, t,\BFw,\BFw^-,\\\BFeta,\chi, \BFbeta, \Delta, \zeta\end{array}} \quad & t\\  
\subto \quad &  \zeta+\sum_{k'\in\K}w_{k'}+\sum_{k'\in\K}w_{k'}^-+\Gamma\chi \notag\\\qquad &+\sum_{k'\in\K}
 \hat{q}_{k'}\beta_{k'}+\frac{1}{1-\alpha}\sum\limits_{\BFell \in \LL} \Delta_{\BFell,  k}-\beta_k \leq t\quad && \forall k\in\K \,,\label{eq:DRIP:RC:C1}\\
     \quad & \chi\geq  \bar{q}_k\beta_k-w_k&& \forall k \in \K \label{eq:chi_plus}\\
    \quad & \chi\geq -\bar{q}_k\beta_k-w^-_k&&\forall k 
    \in \K \label{eq:chi_minus}\\
 \quad & \BFw\geq 0,\BFw^-\geq 0,\chi \geq 0\label{eq:w:chi}\\
 \quad & \eqref{eq:DRO_Delta}-\eqref{eq:DRO_zeta_bound}. \notag
  \end{align}
  \end{subequations}
\end{proposition}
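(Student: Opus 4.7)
The plan is to eliminate the semi-infinite constraint \eqref{eq:DRO_non_linear_constraint} by dualizing the inner worst case over $\BFq\in\Q$, while leaving the remaining constraints \eqref{eq:DRO_Delta}--\eqref{eq:DRO_zeta_bound} untouched. Holding all other variables fixed, \eqref{eq:DRO_non_linear_constraint} is equivalent to $\zeta + \max_{\BFq\in\Q}\sum_{k\in\K} q_k a_k \leq t$, where $a_k := \frac{1}{1-\alpha}\sum_{\BFell\in\LL}\Delta_{\BFell,k}$, so it is enough to reformulate this inner maximum as a finite system and graft the result back into the outer minimization.

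To express the inner problem as a linear program, I will lift the $\ell_1$ budget inside $\Z(\Gamma)$ by writing $\BFz = \BFz^+ - \BFz^-$ with $\BFz^+,\BFz^-\geq 0$ and $\sum_k(z_k^+ + z_k^-)\leq\Gamma$. Because $z_k^+$ and $z_k^-$ enter only through $\bar{q}_k(z_k^+ - z_k^-)$, any LP optimum can be taken with $z_k^+ z_k^- = 0$, so the lift is without loss of generality. The inner problem then becomes a standard LP in $(\BFq,\BFz^+,\BFz^-)$ with the equalities $q_k - \bar{q}_k z_k^+ + \bar{q}_k z_k^- = \hat{q}_k$, the normalization $\sum_k q_k = 1$, the box constraints $z_k^+,z_k^-\leq 1$, the budget above, and $\BFq,\BFz^+,\BFz^-\geq 0$. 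Feasibility is witnessed by $\hat{\BFq}\in\Q$ (taking $\BFz=0$) and the feasible region is bounded, so LP strong duality applies.

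The dualization itself is routine. I will assign free multipliers $\beta_k$ to the $k$-th equality and $\mu$ to the normalization, nonnegative multipliers $w_k$ and $w_k^-$ to the box constraints $z_k^+\leq 1$ and $z_k^-\leq 1$, and $\chi\geq 0$ to the budget. Stationarity in $q_k\geq 0$ yields $\beta_k+\mu\geq a_k$ for each $k$, while stationarity in $z_k^+\geq 0$ and $z_k^-\geq 0$ reproduces exactly \eqref{eq:chi_plus} and \eqref{eq:chi_minus}. The dual objective is $\sum_k \hat{q}_k\beta_k + \mu + \sum_k w_k + \sum_k w_k^- + \Gamma\chi$, so the original semi-infinite constraint is replaced by the existence of $(\BFbeta,\mu,\BFw,\BFw^-,\chi)$ satisfying these cone constraints together with $\zeta + \sum_k\hat{q}_k\beta_k + \mu + \sum_k w_k + \sum_k w_k^- + \Gamma\chi\leq t$ and $\mu\geq a_k-\beta_k$ for every $k$.

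The last step is to eliminate $\mu$ via the standard epigraph trick: since $\mu$ enters the master inequality only with coefficient $+1$ and is otherwise constrained only by $\mu\geq a_k-\beta_k$, one can take $\mu = \max_k(a_k-\beta_k)$ at any feasible point, which is equivalent to imposing the master inequality for each $k\in\K$ with $\mu$ replaced by $a_k-\beta_k$. Unrolling the definition of $a_k$ produces exactly \eqref{eq:DRIP:RC:C1}, and combining with \eqref{eq:chi_plus}--\eqref{eq:w:chi} and the inherited constraints \eqref{eq:DRO_Delta}--\eqref{eq:DRO_zeta_bound} yields \eqref{eq:DRIP:RC}. The main obstacle is the dualization bookkeeping: tracking the sign of each constraint on the $\BFq\geq 0$ cone, matching the two-sided $\ell_1$ budget with the triple $(w_k,w_k^-,\chi)$ through the $\BFz^+,\BFz^-$ lift, and confirming the lift is tight at optimality so that no relaxation is introduced.
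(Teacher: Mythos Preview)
Your proposal is correct and follows essentially the same approach as the paper: split the $\ell_1$ budget via $\BFz=\BFz^+-\BFz^-$, take the LP dual of the inner maximization (strong duality via the feasible point $\hat{\BFq}$), and then eliminate the free multiplier $\mu$ by replacing it with $\max_{k}(a_k-\beta_k)$ to obtain the $|\K|$ copies of \eqref{eq:DRIP:RC:C1}. The only cosmetic difference is that the paper first substitutes out $\BFq$ and carries the nonnegativity $\hat{q}_k+\bar{q}_k(\delta_k^+-\delta_k^-)\geq 0$ as an explicit constraint (with its own multiplier, later absorbed by a change of variables), whereas you keep $\BFq$ as a primal variable so that your $\beta_k$ arises directly; both routes land on the identical dual system.
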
 
\begin{proof}
See Appendix \ref{append:proof_robust_counterpart}.
\end{proof}

The procedure that we propose for solving problem \eqref{eq:DRIP:RC} is motivated by the observation that $\zeta$ and $\BFu$ are complicating variables. Indeed, when either $\zeta$ or $\BFu$ is fixed, the problem reduces to a linear program. Since $\zeta$ is also constrained to lie in a bounded interval $\bar{\I}:=[0,\,\bar{\zeta}]$, a spatial branch and bound scheme on $\zeta$ seems appropriate (see \cite{Chandraker2008}).

Our implementation of the spatial branch and bound algorithm  will rely on the existence of two operators. Namely, after defining the problem that we are interested in solving as 
\begin{align*}
g(\bar{\I}):=\min_{\scriptsize \begin{array}{c}\BFu, t,\BFw,\BFw^-,\\\BFeta,\chi, \BFbeta, \Delta, \zeta\end{array}} \quad & t\\ 
\subto \quad& \eqref{eq:DRO_Delta}-\eqref{eq:DRO_u_sum_one}, \eqref{eq:DRIP:RC:C1} -\eqref{eq:w:chi}, \nonumber\\
&  \zeta \in \bar{\I}\,,
\end{align*}
the algorithm will assume the existence of the following two bounding operators $g_{\mymbox{lb}}(\I)$ and $g_{\mymbox{ub}}(\I)$ which satisfy:
\[g_{\mymbox{lb}}(\I) \leq g(\I)\leq g_{\mymbox{ub}}(\I)\,,\,\forall \I\subseteq \bar{\I}\,,\]
and such that for all sequence of intervals $\I_1,\I_2,\dots$ converging to some $\zeta$, the associated sequence of bounds $(g_{\mymbox{lb}}(\I_j),g_{\mymbox{ub}}(\I_j))$ converge to $g(\zeta)$. Finally, the operator $g_{\mymbox{ub}}(\I)$ will be such that one can always efficiently produce a $\zeta_{\mymbox{ub}}^*(\I)\in\I$ such that $g(\{\zeta_{\mymbox{ub}}^*\})=g_{\mymbox{ub}}(\I)$.

With this in hand, we can describe the algorithm. First in words, the spatial branch and bound algorithm starts at a root node capturing $\bar{\I}$ and branches on this node by subdividing it into a number of sub-intervals, considered sub-nodes of the branch and bound tree.
 \footnote{Implementation details: for any interval $\tilde{\I}:=[\zeta_{\mymbox{lb}},\; \zeta_{\mymbox{ub}}]$,  we create sub-intervals $[\zeta_{\mymbox{lb}},\; \zeta_{\mymbox{lb}}+p(\zeta_{\mymbox{ub}}-\zeta_{\mymbox{lb}})]$ and $[ \zeta_{\mymbox{lb}}+p(\zeta_{\mymbox{ub}}-\zeta_{\mymbox{lb}}),\; \zeta_{\mymbox{ub}}]$, where $p=0.2$ if $ (\hat{\zeta}^*-\zeta_{\mymbox{lb}})< (\zeta_{\mymbox{ub}}-\hat{\zeta}^*)$; $p=0.8$ otherwise
 }  Nodes are progressively selected and branched upon until either at node $j$, we have that  $g_{\mymbox{lb}}(\I_j)$ is close enough to $g_{\mymbox{ub}}(\I_j)$, or  $g_{\mymbox{lb}}(\I_j)$ is larger than $g(\{\hat{\zeta}^*\})$ where $\hat{\zeta}^*$ is the best solution found so far. When no more nodes need to be branched upon, the algorithm can terminate and conclude that $\{\hat{\zeta}^*\}$ is close enough to being globally optimal. Based on  $\hat{\zeta}^*$ it is then possible to get a nearly optimal solution $\hat{\BFu}^*$ from problem \eqref{eq:DRIP:RC} where $\zeta$ is fixed to $\hat{\zeta}^*$. Finally, the exact worst-case CVaR of $\hat{\BFu}^*$ can be obtained by solving  problem \eqref{eq:DRIP:RC} where $\BFu$ is fixed to $\hat{\BFu}^*$. For clarity, Algorithm \ref{alg:sbandbAlgo} presents the pseudocode for the procedure that was described.

\begin{algorithm}
\caption{Spatial branch and bound algorithm for solving problem \eqref{eq:DRIP:RC}}\label{alg:sbandbAlgo}
\begin{algorithmic}[1]
\Procedure{SpatialBranch$\&$Bound}{$\epsilon$,$n$}
\State $UB^*\gets\infty$, $\NodeList \gets \{\bar{\I}\}$
\While{$\NodeList \neq \emptyset$}
\State Sort $\NodeList$ according to $g_{\mymbox{lb}}(\cdot)$
\State Take first $\I$ out of $\NodeList$
\If{$g_{\mymbox{ub}}(\I) < UB^*$}
\State $UB^* \gets g_{\mymbox{ub}}(\I)$
\State $\hat{\zeta}^*\gets \zeta_{\mymbox{ub}}^*(\I)$
\EndIf
\State Remove from $\NodeList$, all $\I$ such that $g_{\mymbox{lb}}(\I)\geq UB^*$ 
\If{$g_{\mymbox{ub}}(\I)>(1+\epsilon) g_{\mymbox{lb}}(\I))$}
\State Divide $\I$ into $n$ sub-intervals $\{ \I_1,\dots,\I_n\}$
\State $\NodeList \gets \NodeList \cup \{ \I_1,\dots,\I_n\}$
\EndIf
\EndWhile
\State Solve problem \eqref{eq:DRIP:RC} with constraint $\zeta=\hat{\zeta}^*$ to get $\hat{\BFu}^*$
\State Solve problem \eqref{eq:DRIP:RC} with constraint $\BFu=\hat{\BFu}^*$ to get $\hat{t}^*$
\State \textbf{return} $\hat{\BFu}^*,\hat{t}^*$
\EndProcedure
\end{algorithmic}
\end{algorithm}

We are left with describing how the two operators can be efficiently implemented.

  \subsection{Using  RRLT with C\&CG for \texorpdfstring{$g_{\mymbox{lb}}(\I)$}{lg}}
  \label{subsec:Convex_relax}
  In this section, we describe an efficient procedure that can be used to establish a lower bound for the optimal value of problem \eqref{eq:DRIP:RC}. This procedure will need to overcome the two underlying difficulties of problem \eqref{eq:DRIP:RC}, namely that constraint \eqref{eq:DRO_eta} is bilinear in $\BFu$ and $\zeta$, and that the size of this problem is exponential with respect to $|\K|$ due to the set $\LL$. To tackle the first obstacle, we will employ a popular reduced reformulation linearization technique (see \cite{Liberti2006}) that  will relax the problem to a linear program. The second obstacle will be dealt with by employing a column generation scheme  (\cite{Desrosiers2005}) that only considers a subset $\hat{\LL}\subset \LL$ and
  progressively adds relevant support points to it until optimality conditions are satisfied.
  
Starting with the idea of relaxing the problem to a linear program, we follow similar steps as used in \cite{Liberti2006}. Namely, we start by introducing a set of redundant constraints in problem \eqref{eq:DRIP:RC}. This gives rise to the following equivalent optimization model
\begin{subequations} \label{M_RRLT}
  \begin{align}
\minimize_{\scriptsize \begin{array}{c}\BFu, t,\BFw,\BFw^-,\\\BFeta,\chi, \BFbeta, \Delta, \zeta\end{array}} \quad & t \\ 
\subto  \quad & \eqref{eq:DRO_Delta}-\eqref{eq:DRO_u_sum_one}, \eqref{eq:DRIP:RC:C1}-\eqref{eq:w:chi}, \notag\\
  \quad & \sum\limits_{\BFell \in \LL} \eta_{\BFell}= \zeta  \label{RRLT} \\
    \quad & \eta_{\BFell} \geq  u_\BFell \zeta_{\mymbox{lb}}&& \forall \BFell  \in \LL \label{eq:mccor1}\\
  \quad & \eta_{\BFell} \leq u_\BFell\zeta_{\mymbox{ub}}&& \forall \BFell  \in \LL \label{eq:mccor2} \\
 \quad &\eta_{\BFell}\geq \zeta +\zeta_{\mymbox{ub}}(u_\BFell-1) && \forall \BFell  \in \LL \label{eq:mccor3}\\
 \quad & \eta_{\BFell} \leq \zeta+\zeta_{\mymbox{lb}}(u_\BFell-1) && \forall \BFell  \in \LL, \label{eq:mccor4}\\
\quad &  \zeta_{\mymbox{lb}} \leq \zeta \leq \zeta_{\mymbox{ub}}, \label{eq:zetabound}
  \end{align}
  \end{subequations}
  where $\zeta_{\mymbox{lb}}$ and $\zeta_{\mymbox{ub}}$ are the respective boundaries of $\I$. 
In problem \eqref{M_RRLT}, constraint \eqref{RRLT} is redundant since $\sum_{\BFell\in\LL}u_\ell=1$ implies that $\sum_{\BFell\in\LL}u_\ell \zeta=\zeta$, and we have that $u_\ell\zeta=\eta_\ell$. On the other hand, constraints \eqref{eq:mccor1}-\eqref{eq:mccor4} are so-called McCormick inequalities (see \cite{McCormick1976}) which are known to be redundant given that $\zeta\in\I$ and $0\leq \BFu\leq 1$.

We obtain the linear relaxation of the above problem by removing constraint \eqref{eq:DRO_eta} from problem \eqref{M_RRLT}. For completeness, we present this linear programming relaxation in full details below:
\begin{subequations} \label{RPM}
  \begin{align}
\minimize_{\scriptsize \begin{array}{c}\BFu, t,\BFw,\BFw^-,\\\BFeta,\chi, \BFbeta, \Delta, \zeta\end{array}} \quad & t\\
 \subto \quad & \zeta+\sum_{k'\in\K}w_{k'}+\sum_{k'\in\K}w_{k'}^-+\Gamma\chi \notag \\\qquad&+\sum_{k'\in\K} 
 \hat{q}_{k'}\beta_{k'}+\frac{1}{1-\alpha}\sum\limits_{\BFell \in \LL} \Delta_{\BFell,  k}-\beta_k \leq t && \forall k\in\K \,,\label{eq:DRIP:RC:C1:n}\\
   \quad & \chi\geq  \bar{q}_k\beta_k-w_k&& \forall k \in \K \label{eq:chi_plus:n}\\
    \quad & \chi\geq -\bar{q}_k\beta_k-w^-_k&&\forall k 
    \in \K \label{eq:chi_minus:n}\\
      \quad & \BFone^T \BFu=1  && \label{eq:DRO_u_sum_one:relaxed}\\
       \quad & \sum\limits_{\BFell \in \LL} \eta_{\BFell}= \zeta  \label{RRLT:relaxed}\\
 \quad &\eta_{\BFell}\geq \zeta +\zeta_{\mymbox{ub}}(u_\BFell-1) && \forall \BFell  \in \LL \label{eq:mccor3:n}\\
 \quad & \eta_{\BFell} \leq \zeta+\zeta_{\mymbox{lb}}(u_\BFell-1) && \forall \BFell  \in \LL, \label{eq:mccor4:n}\\
         \quad & \zeta_{\mymbox{lb}} \leq \zeta \leq \zeta_{\mymbox{ub}}  && \label{eq:DRO_zeta_bound:n}\\
            \quad & \BFw\geq 0,\BFw^-\geq 0,\chi \geq 0\label{eq:w:chi:n}\\
     \quad &   \Delta_{\BFell,  k} \geq u_\BFell f_{\BFell, k}-\eta_{\BFell} && \forall \BFell  \in \LL,\, k \in \K   \label{eq:DRO_Delta:n}\\
 \quad &  \Delta_{\BFell,  k} \geq 0,  &&\forall \BFell  \in \LL\, k \in \K \label{eq:DRO_Delta_positive:n}\\
   \quad & \BFu \geq 0 \label{eq:DRO_u_positive:n} \\
\quad & \eta_{\BFell} \geq u_\BFell\zeta_{\mymbox{lb}}&& \forall \BFell  \in \LL \label{eq:mccor1:n}\\
  \quad & \eta_{\BFell} \leq u_\BFell\zeta_{\mymbox{ub}}&& \forall \BFell  \in \LL. \label{eq:mccor2:n} 
  \end{align}
  \end{subequations}
 
In what follows, it will be useful to compactly represent \eqref{RPM} as follows:
\begin{subequations}\label{Compact_RMP}
\begin{align}
   \mathop{\minimize_{\BFx, \{\BFy_{\BFell}\}_{\BFell \in \LL}}} \quad & \BFh^T\BFx\\
 \subto   \quad & 
    A \BFx+\sum_{\BFell \in \LL} B_{\BFell}\BFy_{\BFell} \leq \BFs  \label{eq:xandy} \\
  \quad & W_{\BFell}\BFy_{\BFell} \leq \BFzero && \forall \BFell \in \LL, \label{eq:only:y}
\end{align}
\end{subequations}
where \eqref{eq:xandy} summarizes constraints \eqref{eq:DRIP:RC:C1:n}, \eqref{eq:chi_plus:n},  \eqref{eq:chi_minus:n}, \eqref{eq:DRO_u_sum_one:relaxed}, \eqref{RRLT:relaxed},  \eqref{eq:mccor3:n}, \eqref{eq:mccor4:n}, \eqref{eq:DRO_zeta_bound:n},  
 \eqref{eq:w:chi:n} while \eqref{eq:only:y} summarizes constraints \eqref{eq:DRO_Delta:n},   \eqref{eq:DRO_Delta_positive:n}, \eqref{eq:DRO_u_positive:n}, \eqref{eq:mccor1:n}, \eqref{eq:mccor2:n}.
The decision variables are given by
\begin{align*}
    &\BFx =[\BFbeta^T~\BFw^T\; {\BFw^-}^T~\zeta~\chi~t]^T,\\
    &\BFy_{\BFell} = [\BFDelta^T_{\BFell}~u_\BFell~\eta_{\BFell}]^T, &&\forall \BFell \in \LL
\end{align*}
where $\BFx \in \Re^{3(\NK+1)}, \BFy_\BFell \in \Re^{\NK+2}$ and  $\BFDelta_{\BFell} \in \Re^{\NK}$ for each $\BFell \in \LL$,\;    $A \in \Re^{(5\NK+2\NL+7) \times (3(\NK+1))}$,\;  $B_{\BFell} \in \Re^{(5\NK+2\NL+7)\times (\NK+2)}$, $W_{\BFell} \in \Re^{(2\NK+3) \times (\NK+2)}$, $\BFs \in \Re^{5\NK+2\NL+7}$, $\BFh \in \Re^{ 3(\NK+1)}$. 
For the full description of  $A$, $B$, $W$, $s$, $h$, refer to Appendix \ref{sec:appen:matrix}.
  
 The idea behind column generation methods is to consider that at optimality $\BFy_\BFell\neq 0$ only for a small set of index $\ell\in\LL$. This is a legitimate assumption for our DRNI problem where we expect that there should be an optimal strategy that only randomizes among a relatively small (non-exponential) number of interdiction plans. This was observed for instance in the distributionally robust risk neutral facility location problem studied in \cite{Delage2018}.
 
Given a set $\hat{\LL}\subseteq \LL$, by linear programming duality, we have that the solution of 
\begin{subequations}\label{eq:compactWithZero}
\begin{align}
   \mathop{\minimize_{\BFx, \{\BFy_{\BFell}\}_{\BFell \in \LL}}} \quad & \BFh^T\BFx\\
 \subto   \quad & 
    A \BFx+\sum_{\BFell \in \LL} B_{\BFell}\BFy_{\BFell} \leq \BFs \label{eq:compactWithZero:C1}\\
  \quad & W_{\BFell}\BFy_{\BFell} \leq \BFzero &&  \forall \BFell \in \hat{\LL} \label{eq:compactWithZero:C2}\\
  & \BFy_\ell = 0 &&\forall \ell\in\LL/\hat{\LL},\label{eq:compactWithZero:C3} 
\end{align}
\end{subequations}
  is optimal with respect to problem \eqref{Compact_RMP} if and only if a solution of its dual problem
  \begin{subequations}\label{eq:compactDual}
  \begin{align}
    \maximize_{\BFpsi,\, \{\BFsigma_{\BFell}\}_{\BFell \in \hat{\LL}} } \quad & -\BFpsi^T \BFs\\
  \subto  \quad & \BFh+A^T\BFpsi =\BFzero\\
  \quad & B^T_{\BFell}\BFpsi +W_{\BFell}^T\BFsigma_{\BFell}  = \BFzero && \forall \BFell \in \hat{\LL}\\
  & \BFpsi \geq 0, \;\BFsigma_\BFell \geq 0 && \forall \BFell \in \hat{\LL}, 
  \end{align}
  \end{subequations}
where $\BFpsi \in \Re^{ 5\NK+2\NL+7}$ and $\BFsigma_{\BFell} \in \Re^{2\NK+3}$ are the dual variables associated to constraints \eqref{eq:compactWithZero:C1} and \eqref{eq:compactWithZero:C2} respectively, can be completed with some $\BFsigma_{\BFell} \in \Re^{2\NK+3}$ for all $\BFell\in\LL/\hat{\LL}$ in a way that makes it feasible in the dual of problem \eqref{Compact_RMP}, i.e. problem \eqref{eq:compactDual} where $\hat{\LL}$ is replaced with $\LL$.

In particular, this can be verified after solving problem \eqref{eq:compactWithZero} for some $\hat{\LL}\subseteq \LL$ by obtaining a set $(\hat{\BFpsi},\{ \hat{\BFsigma}_{\BFell} \}_{\ell\in\hat{\LL}})$ of optimal dual variables for constraints \eqref{eq:compactWithZero:C1} and \eqref{eq:compactWithZero:C2} and verifying if they satisfy the following condition:
\begin{eqnarray}\label{eq:violatedConstraintCond}
\inf_{\BFell \in \LL }\sup_{ \BFsigma_{\BFell} \geq 0} \inf_{\BFy_{\BFell}} {\BFpsi^*}^T B_{\BFell}\BFy_{\BFell} + \BFsigma^T_{\BFell}W_{\BFell}\BFy_{\BFell} \geq 0.
\end{eqnarray}
Furthermore, when condition \eqref{eq:violatedConstraintCond} is not satisfied, a violating $\BFell\in\LL$, which is necessarily not in $\hat\LL$, can be identified and added to $\hat{\LL}$ in order to improve the solution obtained by problem \eqref{eq:compactWithZero}.

Two important observations need to be made at this point. First, fortunately enough problem \eqref{eq:compactWithZero} can be shown to reduce to a linear program which size is linear in $|\hat{\LL}|$ given that only the decision variables $(\BFx,\{ \BFy_{\BFell} \}_{\BFell\in\hat{\LL}})$ need to be optimized, while the only constraints indexed by some $\BFell\in\LL/\hat{\LL}$ are constraint \eqref{eq:compactWithZero:C3} and a subset of constraint \eqref{eq:compactWithZero:C1} which capture constraints \eqref{eq:mccor3:n} and \eqref{eq:mccor4:n}. The latter two become redundant for any $\BFell\notin\hat{\LL}$ since in those cases the constraints reduce to:
\begin{align*}
&\zeta_{\mymbox{lb}}\leq \zeta \leq \zeta_{\mymbox{ub}} && \forall \BFell  \in \LL/\hat{\LL}.
\end{align*}
Secondly, one can also show that condition \eqref{eq:violatedConstraintCond} can be verified efficiently by solving a mixed-integer linear program of reasonable size as described in the following proposition.

\begin{proposition}\label{thm:violatedConstReform}
Given some $\I\subseteq \bar{\I}$, verifying condition \eqref{eq:violatedConstraintCond} is equivalent to verifying whether the optimal value of the following mixed-integer linear program is non-negative:
\begin{subequations}\label{eq:violatedConstMILP}
\begin{align}
    \mathop{\minimize_{\BFell \in \LL, \BFDelta, \eta}}_{\{\BFlambda_k, \BFupsilon_k,\BFUpsilon_k\}_{k\in\K}} \quad & \frac{{\BFvarphi^*}^T}{1-\alpha} \BFDelta+p^* +\pi^*
    \eta \\
    \subto \quad &  \Delta_{k}\geq \BFc_k^T(\BFlambda_k- \BFUpsilon_k) -\eta &&\forall k \in \K\\
  \quad &  \BFUpsilon_k \leq \BFell &&k \in \K\\
  \quad &  \BFUpsilon_k \leq \BFlambda_k &&\forall k \in \K\\
   \quad & \BFUpsilon_k \geq \BFlambda_k+\BFell -1&&\forall k \in \K\\
  \quad &  \BFUpsilon_k  \geq \BFzero &&\forall k \in \K\\
    \quad & \BFlambda_k+N^T\BFupsilon_k -\BFd \geq 0 && \forall k \in \K \label{sub_lambda}\\
   \quad & 0\leq \BFlambda_k\leq 1 && \forall k \in \K\\
  \quad & \Delta_k \geq 0 && \forall k \in \K\\ 
 \quad &  \eta \geq \zeta_{\mymbox{lb}}\\
\quad &   \eta \leq \zeta_{\mymbox{ub}} \label{sub_eta}\\
\quad & \BFone^T \BFell \leq B \label{budget}\\
\quad &  \BFell \in \{0,1\}^{\NE},
\end{align}
\end{subequations}
where $\BFlambda_k \in \Re^{\NE},\; \BFDelta \in \Re^{\NK},\eta \in \Re, \; \BFupsilon_k \in \Re^{\lvert V \rvert}$ and $\BFUpsilon \in \Re^{\NE\times\NK}$ while $\BFvarphi^*,\; p^*, \pi^*$ are the elements of $\BFpsi^*$ associated with \eqref{eq:DRIP:RC:C1:n}, \eqref{eq:DRO_u_sum_one:relaxed}, \eqref{RRLT:relaxed} respectively.
\end{proposition}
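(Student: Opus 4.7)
The plan is to reduce condition \eqref{eq:violatedConstraintCond} to a single minimization problem through two successive applications of LP duality: one to collapse the inner $\sup_{\BFsigma_\BFell}\inf_{\BFy_\BFell}$ into a straight minimization over $(\BFell,\BFy_\BFell)$, and a second to replace the embedded max-flow value $f_{\BFell,k}$ by its dual so that everything can be packaged into an MILP.

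First I would unpack the inner min-max. Since $\BFy_\BFell$ is unconstrained, $\inf_{\BFy_\BFell}\BFy_\BFell^{T}(B_\BFell^{T}\BFpsi^{*}+W_\BFell^{T}\BFsigma_\BFell)$ equals $0$ when the coefficient vanishes and $-\infty$ otherwise; hence $\sup_{\BFsigma_\BFell\geq 0}\inf_{\BFy_\BFell}(\cdot)$ is $0$ when some $\BFsigma_\BFell\geq 0$ satisfies $B_\BFell^{T}\BFpsi^{*}+W_\BFell^{T}\BFsigma_\BFell=0$, and $-\infty$ otherwise. A Farkas-type alternative then shows that \eqref{eq:violatedConstraintCond} is equivalent to
\[
\min_{\BFell\in\LL,\; \BFy_\BFell:\; W_\BFell\BFy_\BFell\leq 0}\;\BFpsi^{*T}B_\BFell\BFy_\BFell \;\geq\; 0,
\]
where the minimum is always $\leq 0$ because $\BFy_\BFell = 0$ is feasible. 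Substituting the block structure $\BFy_\BFell = [\BFDelta_\BFell^{T},\,u_\BFell,\,\eta_\BFell]^{T}$ and extracting the rows of \eqref{eq:xandy} in which $\BFy_\BFell$ actually appears---namely \eqref{eq:DRIP:RC:C1:n}, \eqref{eq:DRO_u_sum_one:relaxed} and \eqref{RRLT:relaxed}, since the per-$\BFell$ McCormick rows \eqref{eq:mccor3:n}--\eqref{eq:mccor4:n} are vacuous at $\BFy_\BFell=0$ for any $\BFell\notin\hat{\LL}$---yields the objective $\tfrac{1}{1-\alpha}\BFvarphi^{*T}\BFDelta_\BFell + p^{*}u_\BFell + \pi^{*}\eta_\BFell$, while $W_\BFell\BFy_\BFell\leq 0$ reproduces the specialization of \eqref{eq:DRO_Delta:n}--\eqref{eq:mccor2:n} to this $\BFell$.

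Next I would exploit positive homogeneity to fix $u_\BFell$. If $u_\BFell=0$, then the bounds $u_\BFell\zeta_{\mymbox{lb}}\leq\eta_\BFell\leq u_\BFell\zeta_{\mymbox{ub}}$ force $\eta_\BFell=0$ and $\BFDelta_\BFell\geq 0$, so the objective reduces to $\tfrac{1}{1-\alpha}\BFvarphi^{*T}\BFDelta_\BFell\geq 0$ because $\BFvarphi^{*}\geq 0$ as dual of an inequality. Any strictly improving direction therefore has $u_\BFell>0$ and can be rescaled to $u_\BFell=1$, turning $p^{*}u_\BFell$ into the additive constant $p^{*}$ and leaving $\Delta_{k}\geq f_{\BFell,k}-\eta$ as the only remaining coupling. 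I would then substitute for $f_{\BFell,k}$ its LP dual
\[
f_{\BFell,k} \;=\; \min_{\BFlambda_k,\,\BFupsilon_k}\;\BFlambda_k^{T}C^{k}(\BFone-\BFell) \quad \mbox{s.t.} \quad \BFlambda_k+N^{T}\BFupsilon_k\geq \BFd,\;\BFlambda_k\geq 0,
\]
and note that, because $\varphi_k^{*}/(1-\alpha)\geq 0$ assigns $\Delta_k$ a non-negative objective coefficient, jointly minimizing over $(\BFDelta,\BFlambda_k,\BFupsilon_k)$ subject to the relaxed $\Delta_k\geq \BFlambda_k^{T}C^{k}(\BFone-\BFell)-\eta$ together with dual feasibility is equivalent to enforcing the original $\Delta_k\geq f_{\BFell,k}-\eta$.

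Finally I would linearize the bilinear term $\BFlambda_k^{T}C^{k}\BFell = \BFc_k^{T}(\BFlambda_k\odot\BFell)$ by introducing $\BFUpsilon_k := \BFlambda_k\odot\BFell$, which produces the coefficient $\BFc_k^{T}(\BFlambda_k-\BFUpsilon_k)$ of \eqref{sub_lambda}. Since $\ell_e\in\{0,1\}$ and one can restrict $\BFlambda_k\in[0,1]^{\NE}$ without loss---the dual of max-flow is the LP relaxation of min-cut whose constraint matrix is totally unimodular, so a $\{0,1\}$-valued optimal vertex always exists---the standard McCormick inequalities $\BFUpsilon_k\leq\BFell$, $\BFUpsilon_k\leq\BFlambda_k$, $\BFUpsilon_k\geq\BFlambda_k+\BFell-1$, $\BFUpsilon_k\geq 0$ give an exact representation of $\BFUpsilon_k=\BFlambda_k\odot\BFell$. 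Combining these ingredients with $\BFell\in\{0,1\}^{\NE}$, $\BFone^{T}\BFell\leq B$ and $\eta\in[\zeta_{\mymbox{lb}},\zeta_{\mymbox{ub}}]$ reconstructs exactly problem \eqref{eq:violatedConstMILP}, whose optimum is non-negative iff condition \eqref{eq:violatedConstraintCond} holds. I expect the two delicate points to be the bookkeeping required to isolate $\BFvarphi^{*}$, $p^{*}$, $\pi^{*}$ from the full dual vector $\BFpsi^{*}$ using the block structure laid out in Appendix \ref{sec:appen:matrix}, and the justification of the $\BFlambda_k\leq 1$ restriction, which is what makes the McCormick linearization exact rather than merely a relaxation.
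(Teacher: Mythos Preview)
Your proposal is correct and follows essentially the same route as the paper's proof: swap the inner $\sup$--$\inf$ via LP duality, expand with the block structure of $B_\BFell$ and $W_\BFell$, eliminate $u_\BFell$ by homogeneity, replace $f_{\BFell,k}$ with its min-cut dual, and linearize $\BFell^TC^k\BFlambda_k$ with McCormick. Two small presentational differences are worth noting: the paper cites \cite[Lemma~1]{Cormican1998} for the bound $\BFlambda_k\leq 1$, whereas you invoke total unimodularity of the max-flow/min-cut polytope; and you make explicit (which the paper leaves implicit) that the dual components of $\BFpsi^*$ attached to the McCormick rows \eqref{eq:mccor3:n}--\eqref{eq:mccor4:n} can be taken to be zero for any $\BFell\notin\hat{\LL}$ because those rows collapse to the redundant $\zeta\in[\zeta_{\mymbox{lb}},\zeta_{\mymbox{ub}}]$ when $\BFy_\BFell=0$---your word ``vacuous'' should really be ``redundant,'' but the intent and conclusion are right.
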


\begin{proof}
See Appendix \ref{proof:thm:violatedConstReform}.
\end{proof}

For completeness, we present the pseudocode of the column generation algorithm described above in Algorithm \ref{alg:candcg}. Note that the algorithm is guaranteed to converge in a finite number of iterations given that at each iteration, either the algorithm terminates or a new element $\BFell\in\LL$ is added to $\hat{\LL}$, yet $|\LL|$ is finite.

\begin{algorithm}
\caption{Column Generation Algorithm to solve problem \eqref{RPM}}\label{alg:candcg}
\begin{algorithmic}[1]
\Procedure{ColumnGeneration}{}
\State $\hat{\LL}\gets \{\BFzero\}$
\While{$\hat{\LL}\neq \LL$}
\State Solve problem \eqref{RPM} with $\eta_\BFell=0$,$u_\BFell=0$, $\BFDelta_\BFell=\BFzero$ for all $\BFell\in\LL/\hat{\LL}$\label{alg:sbandbAlgo:StepX}
\State Identify a set of optimal dual variables $\BFvarphi^*$, $p^*$, and $\pi^*$ as defined in Proposition \ref{thm:violatedConstReform}.
\State Solve problem \eqref{eq:violatedConstMILP} to obtain optimal value $v^*$ and optimal $\BFell^*$
\If{$v^*\geq 0$}
\State \textbf{return} Optimal solution obtained in step \ref{alg:sbandbAlgo:StepX}
\Else
\State $\hat{\LL}\gets \hat{\LL} \cup \{\BFell^*\}$
\EndIf
\EndWhile
\State \textbf{return} Solve \eqref{RPM} and return optimal solution
\EndProcedure
\end{algorithmic}
\end{algorithm}


\subsection{Using coordinate descent for  \texorpdfstring{$g_{\mymbox{ub}}(\I)$}{lg}}\label{subsec:upp}

Given an interval $\I\subseteq \bar{\I}$ we are looking for an upper bound on $g(\I)$ and a value of $\zeta\in\I$ such that $g(\I)$ matches this upper bound. In order to accomplish this task, we will first look back at the solution of problem \eqref{RPM} to identify the optimal support $\LL_{\mymbox{lb}}^*$ and distribution $\BFu_{\mymbox{lb}}^*$ of the lower bounding problem. We then perform coordinate descent on problem \eqref{eq:DRIP:RC} where $\LL$ is replaced with $\LL_{\mymbox{lb}}^*$ iterating between a step where $\BFu$ stays fixed at the best solution found so far, initially at $\BFu_{\mymbox{lb}}^*$, and a step where it is rather $\zeta$ that stays fixed. In both cases, the problem reduces to a linear program whose size is linear in the size of $\LL_{\mymbox{lb}}^*$. This procedure is considered to have converged when the relative improvement on optimal value is considered small enough. For completeness, we provide the pseudocode for the coordinate descent algorithm in Algorithm \ref{alg:CD}.

 \begin{algorithm}
\caption{Coordinate Descent Algorithm to obtain upper bound on problem \eqref{eq:DRIP:RC}}\label{alg:CD}
\begin{algorithmic}[1]
\Procedure{CoordinateDescent}{$\epsilon$, $\LL_{\mymbox{lb}}^*$, $\BFu_{\mymbox{lb}}^*$}
\State $\bar{\BFu}^* \gets \BFu_{\mymbox{lb}}^*$,  $\LL\gets \LL_{\mymbox{lb}}^*$
\Do
\State Solve problem \eqref{eq:DRIP:RC} with constraint $\BFu=\bar{\BFu}^*$ to get optimal value $t_1^*$ and optimal $\zeta_{\mymbox{ub}}^*(\I)$
\State Solve problem \eqref{eq:DRIP:RC} with constraint $\zeta=\zeta_{\mymbox{ub}}^*(\I)$ to get optimal value $t_2^*$ and optimal $\bar{\BFu}^*$
\doWhile{$t_2^*< (1-\epsilon)t_1^*$}
\State $g_{\mymbox{ub}}(\I)\gets t_1^*$
\State \textbf{return} $g_{\mymbox{ub}}(\I)$,  $\zeta_{\mymbox{ub}}^*(\I)$
\EndProcedure
\end{algorithmic}
\end{algorithm}

\section{Numerical experiments}\label{sec:simulate}

We performed a series of  numerical experiments to show the convergence and numerical efficiency of the spatial branch and bound algorithm (in Section \ref{sec:numExp:converge}), and to compare the in-sample and out-of-sample performance of optimal randomized and deterministic interdiction plan strategies in Section \ref{sec:numExp:outSample}.   All algorithms were implemented in  Matlab 2019b using the YALMIP toolbox  and CPLEX 12.9.0 was used to solve all continuous and mixed-integer linear programs. The DRNI problem instances were generated using the network structure given in Figure \ref{fig:network_simulate}. Within the same column, the arcs can point upward or downward with equal probability whereas between different columns, the arcs always point in the direction of the sink. The number of rows is denoted by $m$ and the number of columns is denoted by $n$. This class of network instances is the same as the one used in \cite{Cormican1998}, \cite{Janjarassuk2008} and \cite{Atamturk2017} except that we allow any arc to be interdicted whereas they have assumed that the arcs  in the first and last column and those leaving the source node (s) or entering the sink (t) are not interdictable and have infinite capacity. 
For each DRNI problem instance, capacity vector scenarios are drawn from a factor model, $\BFc:=F\BF{\xi}$, with each $\xi_i$ independently distributed according to an exponential distribution with mean $\mu_i$, for some fixed $F\in\Re^{\NE\times 2}$ and $\BF{\mu}\in\Re^2$ that were randomly generated for the given instance.

\begin{figure}[t]
\centering
\begin{tikzpicture}[
      mycircle/.style={
         circle, 
         draw=black,
         fill=white,
         fill opacity = 0.3,
         text opacity=1,
         inner sep=0pt,
         minimum size=10pt,
         font=\small},
      myarrow/.style={-Stealth},
         scale=0.4, mycircles/.style={
         circle,
         draw=black,
         fill=white,
         fill opacity = 0.5,
         text opacity=1,
         inner sep=0pt,
         minimum size=15pt,
         font=\small},
      myarrow/.style={-Stealth},
      ]
\node[mycircles] (c0) at (0, 0) {$s$};
\node[mycircle] (c1) at (3,3) {};
\node[mycircle] (c2) at (3,1) {};
\node[mycircle] (c3) at (3,-3) {};
\node[mycircle] (c4) at (5,3) {};
 \node[mycircle] (c5) at (5,1) {};
  \node[mycircle] (c6)  at (5,-3) {};
\node[mycircle] (c7) at (9,3){};
\node[mycircle] (c8) at (9,1){};
\node[mycircle] (c9) at (9,-3) {};
\node[mycircle] (c11) at (11,3){};
\node[mycircle] (c12) at (11,1){};
\node[mycircle] (c13) at (11,-3) {};
\node[mycircles] (c10) at (14,0) {$t$};
     \foreach \i/\j/\txt/\p in {
      c0/c1/ /above,
      c0/c2/ /above,
      c0/c3/ /right,
       c1/c2/ /right,
    c1/c4/ /,
      c2/c5/ /,
       c5/c4/ /,
      c3/c6/ /,
       c12/c11/ /right,
      c11/c10/ /right,
      c12/c10/ /above,
      c7/c11/ /,
      c8/c12/ /,
       c7/c8/ /,
      c9/c13/ /,
      c13/c10/ /right}
 \draw [myarrow] (\i) -- node[font=\small,\p] {\txt} (\j);
  \draw [myarrow] (c2) --node[midway,right]{ } (3,-0.6);
   \draw [myarrow]  (c3)--node[midway,right]{ }(3,-1.5);
     \draw [myarrow] (c5) -- (5,-0.6);
   \draw [myarrow]  (5,-1.5) --(c6);
     \draw [myarrow] (c8) -- (9,-0.6);
   \draw [myarrow]  (9,-1.5) --(c9);
 \draw [myarrow] (c4) -- (6.6,3);
  \draw [myarrow]  (7.5,3)--(c7);
   \draw [myarrow] (c5) -- (6.6,1);
  \draw [myarrow]  (7.5,1)--(c8);
     \draw [myarrow] (c6) -- (6.6,-3);
  \draw [myarrow]  (7.5,-3)--(c9);
  \draw [myarrow]  (11,-0.6)--node[midway,right]{ }(c12);
  \draw [myarrow] (11,-1.5)--node[midway,right]{ }(c13);
   \draw [myarrow]  (c0)--node[midway,above]{ }(3,-1);
    \draw [myarrow] (11,-1)--node[midway,above]{ }(c10);
  \foreach \x in {6.9,7.1,7.3} {
 \fill[color=black] (\x,3) circle (0.05);}
 \foreach \x in {6.9,7.1,7.3} {
  \fill[color=black] (\x,1) circle (0.05);}
      \foreach \x in {6.9,7.1,7.3}{
  \fill[color=black] (\x,-3) circle (0.05);}
 \foreach \y in {-0.9,-1.1,-1.3} {
     \fill[color=black] (3,\y) circle (0.05);}
     \foreach \y in {-0.9,-1.1,-1.3} {
    \fill[color=black] (5,\y) circle (0.05);}
  \foreach \y in {-0.9,-1.1,-1.3} {
  \fill[color=black] (9,\y) circle (0.05);}
     \foreach \y in {-0.9,-1.1,-1.3} {
   \fill[color=black] (11,\y) circle (0.05);}
    \foreach \x in {6.9,7.1,7.3} {
   \fill[color=black] (\x,-1*\x+6) circle (0.05);}
    \end{tikzpicture}
    \caption{Network for numerical experiments}
    \label{fig:network_simulate}
    \end{figure}
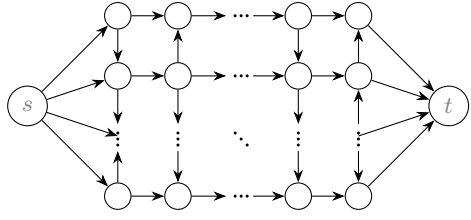

\subsection{Convergence of spatial branch and bound algorithm}\label{sec:numExp:converge}
In this subsection, we show that the spatial branch and bound algorithm described in Section \ref{sec:sBB} converges quickly to approximately optimal solutions. We randomly generate $10$ instances of sizes $10 \times 10$, $20\times 20$ and $30\times30$ of the DRNI problem as described in the introduction of this section. For each instance, we consider $|K|=20$ randomly generated scenarios from the constructed factor model.
We then solve the instances for a convergence tolerance, $\epsilon$, either equal to $0.01\%$ or $0.0001\%$, terminating the algorithm if the higher precision is not achieved after $1$ hour.  The results of the numerical experiments for $\Gamma=2$, $B=5$, $\hat{\BFq}=(1/20)\BFone$,
$\bar{\BFq}=\BFone$,  $\alpha=0.05$ are given in Table \ref{tab:DR interdiction}. It can be seen that for networks of sizes $10 \times 10$ and $20 \times 20$ , the spatial branch and bound algorithm converges within $1$ hour to an accuracy of $\epsilon=0.0001\%$ for $9$ problem instances (out of $10$) whereas we found that for networks of size $30\times30$ the algorithm did not reach this level of accuracy within $1$ hr for $2$ instances out of $10$. On the other hand, for sizes of problem $10 \times 10$ and $20 \times 20$, all instances converged within less than $1$ hour to a precision of $\epsilon=0.01\%$, reaching on average a gap close to $0.007\%$ and $0.001\%$ respectively in the case of the problematic instances. However, for one instance of problem $30\times30$, the algorithm did not converge in $1$ hr even for $0.01\%$ precision, and the gap was found to be close to $0.25\%.$ 
\begin{table}[htbp]
\caption
{Convergence of spatial branch and bound algorithm for 10 randomly generated instances. * avg. cpu time is reported for those instances which converged with $.0001\%$ precision in $1$ hour}\label{tab:DR interdiction}
\begin{tabular}{@{}ccccc@{}}
\toprule
Problem size $(m\times n)$

  & \multicolumn{2}{c}{\centering avg. cpu time (min)$^*$} & \multicolumn{2}{c}{\centering gap $>0.0001\%$ after $1$ hr}\\
  \cmidrule{2-3} \cmidrule{4-5}
  & $\epsilon=0.0001\%$&$\epsilon=0.01\%$ & $\#$ instances & optimality gap
  \\
      \midrule
  $10\times10$, $\NE=200$ & $2.86$ &$1.85$&$1$ & $0.007\%$ \\ 
  $20\times20$, $\NE=800$ & $8.32$ & $3.13$ &$1$ &$0.001\%$ \\
  $30\times30$, $\NE=1800$ & $7.80$  &$6.44$  &$2$&  $0.14\%$ \\
\bottomrule
\end{tabular}
 \end{table}

\subsection{In-sample and out-of-sample performance of randomized and deterministic strategies} \label{sec:numExp:outSample}

To illustrate the benefit of randomization, we focus our attention on the network given in Figure \ref{ex:out_sample} with $4$ rows and  $2$ columns while the factor model continues to be regenerated for each instance.
We are interested in comparing the in-sample and out-of-sample performance of strategies that are obtained using only $|\K|=20$ scenarios from the  underlying distribution. To do so, we generated $100$ set of samples for $\{\BFc^1,\dots,\BFc^{20}\}$ for each level of $\Gamma  \in \{0,~0.1,~0.5, ~1,~ 10, ~20\}$.  For each set of samples, we compare the in-sample performance of both the randomized interdiction strategy $\hat{\BFu}_{20}^*$ and the deterministic strategy $\hat{\ell}_{20}^*$ that optimizes the DRNI problem constructed from this \quoteIt{observed} sample set, with $B=1$, $\alpha=0.05$, $\hat{\BFq}=(1/20)\BFone$, and $\bar{\BFq}=\BFone$. 
We then compare the performance of the in-sample optimal strategies $\hat{\BFu}_{20}^*$ and $\hat{\BFell}_{20}^*$ on the \quoteIt{unobserved} underlying distribution using a Monte-Carlo simulation of $100000$ scenarios. The in-sample optimal randomized strategy $\hat{\BFu}_{20}^*$ is obtained by using the spatial branch and bound algorithm outlined in Section \ref{sec:sBB} with $\epsilon <0.01\%$, while the optimal in-sample deterministic strategy $\hat{\BFell}_{20}^*$ is computed  by solving the MILP given in Appendix \ref{appen:deterministic}. 

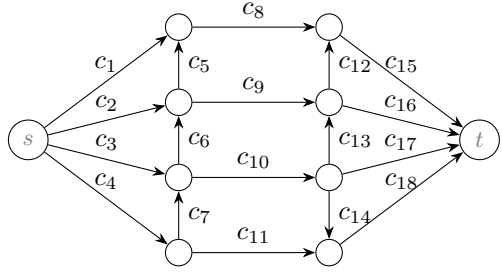
\begin{figure}[htbp]
\centering
\begin{tikzpicture}[
      mycircle/.style={
         circle,
         draw=black,
         fill=white,
         fill opacity = 0.3,
         text opacity=1,
         inner sep=0pt,
         minimum size=10pt,
         font=\small},
      myarrow/.style={-Stealth},
     mycircles/.style={
         circle,
         draw=black,
         fill=white,
         fill opacity = 0.5,
         text opacity=1,
         inner sep=0pt,
         minimum size=15pt,
         font=\small},
      myarrow/.style={-Stealth},
      ]
\node[mycircles] (c0) at (0, 0) {$s$};
\node[mycircle] (c1) at (2,1.5) {};
\node[mycircle] (c2) at (2,0.5) {};
\node[mycircle] (c3) at (2,-0.5) {};
\node[mycircle] (c4) at (2,-1.5) {};
\node[mycircle] (c5) at (4,1.5) {};
\node[mycircle] (c6) at (4,0.5) {};
\node[mycircle] (c7) at (4,-0.5) {};
\node[mycircle] (c8) at (4,-1.5) {};
\node[mycircles] (c10) at (6,0) {$t$};
\path[myarrow]
(c0)  edge  node[above]  {$c_1$}         (c1)
    (c0)  edge   node[above]  {$c_2$}         (c2)
    (c0)  edge  node[above]  {$c_3$}         (c3)
     (c0)  edge  node[above]  {$c_4$}         (c4)
     (c1)  edge   node[above]  {$c_8$}         (c5)
     (c2)  edge  node[right]  {$c_{5}$}         (c1)
      (c3)  edge  node[right]  {$c_{6}$}         (c2)
       (c4)  edge  node[right]  {$c_{7}$}         (c3)
     (c2)  edge  node[above]  {$c_{9}$}         (c6)
      (c3)  edge   node[above]  {$c_{10}$}         (c7)
     (c4)  edge  node[above]  {$c_{11}$}         (c8)
      (c6)  edge  node[right]  {$c_{12}$}         (c5)
       (c7)  edge   node[right]  {$c_{13}$}         (c6)
      (c7)  edge  node[right]  {$c_{14}$}         (c8)
       (c5)  edge   node[above]  {$c_{15}$}         (c10)
     (c6)  edge  node[above]  {$c_{16}$}         (c10)
     (c7)  edge  node[above]  {$c_{17}$}         (c10)
      (c8)  edge  node[above]  {$c_{18}$}         (c10);
    \end{tikzpicture}
    \caption{Example network}
    \label{ex:out_sample}
    \end{figure}

In terms of in-sample performance, we are interested in comparing the added value of the randomized strategy, which can be defined as the relative gap between the worst-case CVaR obtained by the deterministic  and randomized strategies:
     \begin{align*}
 \mbox{VRS}=\frac{\max_{\BFq\in \hat{\Q}_{20}} \mbox{CVaR}^\alpha_{ k\sim \BFq}[ f_{\hat{\BFell}_{20}^*, k}]-\max_{\BFq\in \hat{\Q}_{20}} \mbox{CVaR}^\alpha_{ \BFell\sim \hat{\BFu}_{20}^*,k\sim \BFq}[ f_{\BFell, k}]}{\max_{\BFq\in \hat{\Q}_{20}} \mbox{CVaR}^\alpha_{ \BFell\sim \hat{\BFu}_{20}^*,k\sim \BFq}[ f_{\BFell, k}]}\times 100\%.
    \end{align*}
where $\hat{\Q}_{20}$ is centered at the empirical distribution over the sample set. Table \ref{tab:in-sample} reports the number of instances, for each level of $\Gamma$, for which VRS was equal to zero, between zero and 1\%, and above $1\%$. For the later case, the average VRS is also reported. We find that, for $\Gamma \in \{0,~0.1\}$, all $100$ instances achieved a VRS below $1\%$. This is reasonable given that, as $\Gamma$ goes to $0$, the DRNI problem converges to CVaR minimization with known distribution for which it is known that deterministic strategies are optimal. We also observe that as distributional ambiguity is increased, the randomized strategies perform significantly  better than deterministic ones with respect to the in-sample DRNI problem instance, reaching for $\Gamma=20$ an average VRS of $4.67\%$.

\begin{table}[htbp]
\caption{In-sample value of randomized solution performances } \label{tab:in-sample}
\begin{tabular}{@{}ccccc@{}}
\toprule
$\Gamma$ &  \multicolumn{1}{p{3cm}}{\centering $\mbox{VRS}=0$}&  \multicolumn{1}{p{3cm}}{\centering $0<\mbox{VRS}<1\%$} &\multicolumn{2}{c}{\mbox{VRS} $\geq1 \%$}\\
\cmidrule{4-5}
& $\#$ instances& $\#$ instances  &  $\#$ instances & avg. \mbox{VRS}  \\ \midrule
$0$ & $100$ & $0$ & $0$& $0$\\
$0.1$ & $98$ &$2$ & $0$& $0$\\
$0.5$ & $94$ &$4$ &$2$ &$2.09\%$ \\
$1$ & $89$ & $5$ & $6$ & $3.48\%$ \\
$10$ & $78$ &$12$ &$10$   &$4.45\%$\\
$20$ & $78$ & $14$ & $8$  & $4.67\%$ \\
\bottomrule
\end{tabular}
\end{table}

\begin{table}[htbp]
\caption{Comparison of out-of-sample CVaR performance for deterministic and randomized strategy for instances where  randomized strategy improved on the in-sample performance of deterministic strategy by at least $1\%$.}\label{tab:out-sample}
 \begin{tabular}{@{}cccccc@{}}
\toprule
$\Gamma$ &  $\#$ of instances &
avg. $\mbox{CVaR}_{r}$ &
avg. $\mbox{CVaR}_{d}$ & \multicolumn{1}{p{3cm}}{\centering $\#$ of instances $\mbox{CVaR}_r<\mbox{CVaR}_d$}&$(\mbox{CVaR}_r-\mbox{CVaR}_d)/\mbox{CVaR}_d$ \\ \midrule
$0.5$ & $2$& $1.09$ & $1.16$ & $2$&$36\%$\\
$1$ & $6$&$0.84$ & $1.07$ & $6$&$19\%$\\
$10$ & $10$&$0.77$ &$0.93$ & $10$&$17\%$\\
$20$ &$8$ &$0.64$ &$0.78$ & $8$ &$19\%$\\
\bottomrule
\end{tabular}
 \end{table}

In terms of out-of-sample performance, we only consider problem instances for which the VRS was above 1\%, given that otherwise the optimal in-sample randomized and deterministic strategies are too similar. We denote the optimal out-of-sample CVaR corresponding to randomized strategies and deterministic interdiction plans by $\mbox{CVaR}_{r}$ and $\mbox{CVaR}_{d}$ respectively. It can be clearly seen from Table \ref{tab:out-sample} that  randomized strategies have lower average CVaR than deterministic interdiction plans. In all the out-of-sample instances, we found that the optimal randomized strategies always have lower CVaR than optimal deterministic interdiction plans.  
We argue that this evidence confirms that there is a real observable benefit for the network interdictor for employing randomized interdiction plans in a risk averse network interdiction game, both in a distributionally robust setting (c.f. the in-sample performance comparison) and in a setting where the network capacity distribution information comes from a limited number of observed realizations.

\section{Conclusions}\label{sec:Conclusions}
 In this paper, we introduced a distributionally robust risk averse network interdiction game to model the strategic interactions between a risk-averse interdictor and the flow player.  We solved the interdictor's non-linear bi-convex DRO problem by first reformulating it as a bi-convex optimization problem using LP duality and then devising a spatial branch and bound algorithm. After observing that the optimal randomized strategy  can be supported on a small number of interdiction plans in DRO problems, we developed a column generation algorithm that can be used to efficiently determine the convex relaxation of the interdictor's problem. Our numerical experiments showed that 1) our proposed spatial branch and bound algorithm can efficiently solve distributionally robust interdiction games of reasonable sizes; 2) randomization can be quite effective in reducing the risk exposure obtained from the optimal deterministic interdiction strategy both when comparing in-sample worst-case CVaR and out-of-sample CVaR performances. 

Given that Stackelberg games with single leader and multiple followers have been extensively applied in the literature, it would be interesting as future work to extend our algorithm in a way that can address Stackelberg games with a leader that is both risk and ambiguity averse while  followers implement a Nash equilibrium that accounts for their respective risk aversion. 

\bibliographystyle{apalike}

  \appendix
 \section{Proofs}

  \subsection{ Proof of Proposition \ref{prop:interdictorDRO}}\label{proof:prop:interdictorDRO}
 We substitute the expression of CVaR in \eqref{Network_equilibrium} to obtain
\begin{align}
 \mathop{\minimize_{\BFu \in \Delta \LL} } \max\limits_{\BFq \in \Q} \inf_\zeta \quad & \zeta + \frac{1}{1-\alpha}\sum\limits_{\BFell}\sum\limits_{k}q_k u_\BFell[f_{\BFell, k}- \zeta]^+. \label{single_OP}
\end{align}
We prove by contradiction that an optimal $\zeta$, denoted by $\zeta^*$, lies between $0$ and $\bar{\zeta}$.  First, we assume that $\zeta^*=\zeta_L<0$ is the largest optimal value for $\zeta$. The maximum  flow $f_{\BFell, k}$ for any $\BFell$ and $k$ is bounded below by 0, hence for any $\BFq\in\Q$ the CVaR equals
\begin{align*}
    \zeta_L + \frac{1}{1-\alpha}\sum\limits_{\BFell}\sum\limits_{k}q_k u_\BFell(f_{\BFell, k}- \zeta_L)= \frac{(1-\alpha)\zeta_L-\zeta_L}{1-\alpha} +\frac{1}{1-\alpha}\sum\limits_{\BFell}\sum\limits_{k}q_k u_\BFell f_{\BFell, k}.
\end{align*}
However, we arrive at a contradiction since $\zeta=0$ is at least as good as $\zeta_L$:
\begin{align*}
    \frac{-\alpha \zeta_L }{1-\alpha}+ \frac{1}{1-\alpha}\sum\limits_{\BFell}\sum\limits_{k}q_k u_\BFell f_{\BFell, k} \geq \frac{1}{1-\alpha}\sum\limits_{\BFell}\sum\limits_{k}q_k u_\BFell  f_{\BFell, k} = 0 + \frac{1}{1-\alpha}\sum\limits_{\BFell}\sum\limits_{k}q_k u_\BFell[f_{\BFell, k}- 0]^+,
\end{align*}
for any $\alpha \in [0,1)$. So, we can conclude that $\zeta^*:=0$ is also optimal.

Alternatively, we can assume that  $\zeta^*=\zeta_H>\bar{\zeta}$ is the smallest optimal solution for $\zeta$. In this case, for any $\BFq\in\Q$ we have that CVaR equals:
\begin{align*}
    \zeta_H + \frac{1}{1-\alpha}\sum\limits_{\BFell}\sum\limits_{k}q_k u_\BFell [f_{\BFell,k}-\zeta_H]^+ =\zeta_H ,
\end{align*}
since $f_{\BFell,k}\leq f_{\BFzero,k} \leq \bar{\zeta} < \zeta_H$. 
Yet, for $\zeta=\bar{\zeta}$, the worst-case CVaR, given by $\bar{\zeta}$, is strictly less than $\zeta_H$ which contradicts our assumption that $\zeta_H$ is the smallest optimal solution for $\zeta$.

We now turn ourselves to establishing the reformulation presented as problem \eqref{epi}. Let $v(\zeta, \BFq, \BFu)= \zeta + \frac{1}{1-\alpha}\sum\limits_{\BFell}\sum\limits_{k}q_k u_\BFell[f_{\BFell, k}- \zeta]^+$. Since $v(\zeta,\BFq,\BFu)$ is convex in $\zeta$ for all $\BFq \in \Q$ while being linear in $\BFq$ for all $\zeta\in[0,\,\bar{\zeta}]$, and since $[0,\,\bar{\zeta}]$ is bounded and $\Q$ is convex, it follows from Sion's minimax theorem (see \cite{Sion1958}) that \eqref{single_OP} is equivalent to 
  \begin{align}
 \mathop{\minimize_{\BFu \in \Delta \LL} }\min_{0\leq\zeta\leq \bar{\zeta}} \max\limits_{\BFq \in \Q}\quad & \zeta + \frac{1}{1-\alpha}\sum\limits_{\BFell}\sum\limits_{k}q_k u_\BFell[f_{\BFell, k}- \zeta]^+. \label{Sion_OP}
\end{align}
Since we are minimizing over $\BFu$ and $\zeta$, we have an equivalent reformulation of \eqref{Sion_OP} given by 
  \begin{align*}
  \mathop{\minimize_{\BFu \in \Delta\LL,\BFDelta\geq 0,}}_{0\leq\zeta\leq \bar{\zeta}} \max_{\BFq \in \Q}  \quad & \zeta + \frac{1}{1-\alpha}\sum\limits_{\BFell}\sum\limits_{k}q_{k} \Delta_{\BFell,  k}\\
\subto  \quad &  \Delta_{\BFell,  k} \geq u_\BFell f_{\BFell, k}-u_\BFell\zeta && \forall \BFell  \in \LL,\, k \in \K.
  \end{align*}
Employing an epigraph representation of the above objective function and introducing the decision variable $\eta_\BFell=u_\BFell \zeta$, we obtain problem \eqref{epi}.

\subsection{Proof of Proposition \ref{theorem:robust_counterpart}}\label{append:proof_robust_counterpart}
Substituting $\BFq=\hat{\BFq}+\diag(\bar{\BFq})\BFz$, we obtain that \eqref{eq:DRO_non_linear_constraint} is equivalent to
\begin{align}
\max_{\BFz:\BFz\in\mathcal{Z}(\Gamma),\sum_{k\in\K}\left(\hat{q}_k+\bar{q}_kz_k\right) = 1,\hat{q}_k+\bar{q}_kz_k \geq 0,\forall k\in\K}    \zeta + \frac{1}{1-\alpha}\sum\limits_{\BFell \in \LL}\sum\limits_{k \in \K} (\hat{q_k}+\bar{q}_kz_k) \Delta_{\BFell,  k}-t\leq 0,\label{eq:proof:robustReform1}
\end{align}
where we will exploit a well-known equivalent representation of $\mathcal{Z}(\Gamma)$:
\[\mathcal{Z}(\Gamma)=\left\{\BFz\in\Re^{|\K|}\,\middle|\,\exists\BFdelta^+,\BFdelta^-\in\Re^{|\K|},\;\begin{array}{l}\BFz=\BFdelta^+-\BFdelta^-\\0\leq \BFdelta^+ \leq 1 \\
    0 \leq \BFdelta^- \leq 1\\
    \sum_{k\in\K} \delta_k^+ +  \delta_k^- \leq \Gamma\end{array}\right\}.\]
    Let us define $\gamma_k :=\frac{1}{1-\alpha} \sum_{\BFell \in \LL}\Delta_{\BFell,  k}$, so that we can rewrite the maximization problem in \eqref{eq:proof:robustReform1}  as follows
    \begin{subequations}
\begin{align}
    \mathop{\maximize_{\BFdelta^+,\BFdelta^-}} \quad & \zeta+\sum_{k\in \K}\left(\gamma_k(\hat{q}_k+\bar{q}_k(\delta_k^+-\delta_k^-)\right)-t\\
    \subto \quad & 0\leq \BFdelta^+ \leq 1 \label{eq:proof:RC:max:C1}\\
    \quad & 0 \leq \BFdelta^- \leq 1 \label{eq:proof:RC:max:C2}\\
    \quad & \sum_{k\in\K} \delta_k^+ +  \delta_k^- \leq \Gamma\label{eq:proof:RC:max:C3}\\
    \quad & \sum_{k\in\K} \bar{q}_k\left(\delta_k^+ - \delta_k^-\right)=0\label{eq:proof:RC:max:C4}\\
    \quad & \hat{q}_k +\bar{q}_k\delta_k^+ -\bar{q}_k\delta_k^- \geq 0 &&\forall k \in \K.\label{eq:proof:RC:max:C5}
\end{align}
\end{subequations}

Since $\BFdelta^+=\BFdelta^-=0$ is feasible, we can use strong LP duality to obtain an equivalent problem:
\begin{align*}
\minimize_{\scriptsize \begin{array}{c}\BFw,\BFw^-,\\\chi,\bar{\mu},\bar{\BFbeta}\end{array}}\quad & \zeta+\sum_{k\in \K}w_k+\sum_{k\in \K}w_k^-+\Gamma\chi +\sum_{k\in\K}\left((\gamma_k+\bar{\beta}_k)\hat{q}_k\right)-t\\
   \subto \quad & \bar{q}_k\gamma_k-w_k-\chi+\bar{q}_k\bar{\mu}+\bar{q}_k\bar{\beta}_k\leq 0 && \forall k \in \K\\
    \quad & -\bar{q}_k\gamma_k-w_k^--\chi-\bar{q}_k\bar{\mu}-\bar{q}_k\bar{\beta}_k\leq 0 && \forall k \in \K\\
 \quad & \BFw\geq 0,\BFw^-\geq 0,\chi \geq 0,\bar{\BFbeta}\geq 0,  
\end{align*}
where $\BFw\in\Re^{|\K|}$, $\BFw^-\in\Re^{|\K|}$, $\chi\in\Re$, $\bar{\mu}\in\Re$, and $\bar{\BFbeta}\in\Re^{|\K|}$ are the dual variables associated to constraints \eqref{eq:proof:RC:max:C1} to \eqref{eq:proof:RC:max:C5} respectively.
Substituting $\BFbeta:=\BFgamma+\bar{\mu}+\bar{\BFbeta}$ and then $\mu:=-\bar{\mu}$, we obtain 
\begin{subequations}
\begin{align}
    \minimize_{\scriptsize \begin{array}{c}\BFw,\BFw^-,\\\chi,\mu,\BFbeta\end{array}}\quad & \zeta+\sum_{k\in \K}w_k+\sum_{k\in \K}w_k^-+\Gamma\chi +\sum_{k\in \K}\hat{q}_k\beta_k+\mu-t\\
     \subto  \quad &\mu\geq \gamma_k-\beta_k &&\forall k 
    \in \K \label{eq:mu:p}\\
    \quad & \chi\geq  \bar{q}_k\beta_k-w_k&& \forall k \in \K \label{eq:chi_plus:p}\\
    \quad & \chi\geq -\bar{q}_k\beta_k-w^-_k&&\forall k 
    \in \K \label{eq:chi_minus:p}\\
 \quad & \BFw\geq 0,\BFw^-\geq 0,\chi \geq 0\label{eq:w:chi:p}.
\end{align}
\end{subequations}

Combining the above problem with \eqref{epi}, we obtain
  \begin{align*}
 \minimize_{\scriptsize \begin{array}{c}\BFu, t,\BFw,\BFw^-,\\\BFeta,\chi, \BFbeta, \Delta, \zeta\end{array}} \quad & t\\ 
\subto \quad &  \eqref{eq:DRO_Delta}-\eqref{eq:DRO_zeta_bound}, \eqref{eq:chi_plus:p}-\eqref{eq:w:chi:p} \\
 \quad & \zeta+\sum_{k' \in \K}w_{k'}+\sum_{k' \in \K}w_{k'}^-+\Gamma\chi \notag\\\qquad&+\sum_{k'\in\K} 
 \hat{q}_{k'}\beta_{k'}+\frac{1}{1-\alpha}\sum\limits_{\BFell \in \LL} \Delta_{\BFell,  k}-\beta_k \leq t&& \forall k\in\K \,,
  \end{align*}
where we were able to replace $\mu$ with $\max_{k\in\K} \frac{1}{1-\alpha}\sum\limits_{\BFell \in \LL} \Delta_{\BFell,  k}-\beta_k$.

 \subsection{Proof of Proposition \ref{thm:violatedConstReform}}\label{proof:thm:violatedConstReform}
We start by repeating the definition of condition \eqref{eq:violatedConstraintCond}:
\begin{align*}
   \quad & \inf_{\BFell \in \LL }\sup_{ \BFsigma_{\BFell} \geq 0} \inf_{\BFy_{\BFell}}\; {\BFpsi^*}^T B_{\BFell}\BFy_{\BFell} + \BFsigma^T_{\BFell}W_{\BFell}\BFy_{\BFell}\geq 0.
\end{align*}
We will first argue that the order of $\sup_{ \BFsigma_{\BFell} \geq 0}$ and $\inf_{\BFy_{\BFell}}$ can be changed without affecting the value that is obtained. In particular,
\begin{align*}
\sup_{ \BFsigma_{\BFell} \geq 0} \inf_{\BFy_{\BFell}} \;{\BFpsi^*}^T B_{\BFell}\BFy_{\BFell} + \BFsigma^T_{\BFell}W_{\BFell}\BFy_{\BFell} = \sup_{\BFsigma_{\BFell} \geq 0:B_{\BFell}^T\BFpsi^*+W_{\BFell}^T\BFsigma_{\BFell}=0} 0,\end{align*}
while 
\begin{align*}
\inf_{\BFy_{\BFell}}\sup_{ \BFsigma_{\BFell} \geq 0} {\BFpsi^*}^T B_{\BFell}\BFy_{\BFell} + \BFsigma^T_{\BFell}W_{\BFell}\BFy_{\BFell} = \inf_{\BFy_{\BFell}:W_{\BFell}\BFy_{\BFell}\leq 0} {\BFpsi^*}^T B_{\BFell}\BFy_{\BFell}.\end{align*}
Since the two problems are dual of each other and $\inf_{\BFy_{\BFell}:W_{\BFell}\BFy_{\BFell}\leq 0} {\BFpsi^*}^T B_{\BFell}\BFy_{\BFell}$ is feasible with $\BFy_\BFell=0$, we can conclude by strong LP duality that the two values are the same.

We thus obtain that the left-hand side of condition \eqref{eq:violatedConstraintCond} can be obtained by solving:
\begin{align*}
 \minimize_{\BFell \in \LL ,\BFy_{\BFell}}\sup_{ \BFsigma_{\BFell} \geq 0 } {\BFpsi^*}^T B_{\BFell}\BFy_{\BFell} + \BFsigma^T_{\BFell}W_{\BFell}\BFy_{\BFell}.  
\end{align*}
On taking the dual of the inner maximization problem, we have
\begin{align*}
    \minimize_{\BFell  \in \LL,\BFy_{\BFell}} \quad & {\BFpsi^*}^T B_{\BFell}\BFy_{\BFell}  && \qquad \\
   \subto \quad & W_{\BFell}\BFy_{\BFell}\leq 0,
\end{align*}
which can be written more carefully as
\begin{subequations}\label{eq:proof:propIII:probA}
\begin{align}
   \mathop{\minimize_{\BFell \in \LL,\bar{\BFDelta}, \bar{u}, \bar{\eta}}} \quad & \frac{{\BFvarphi^*}^T}{1-\alpha} \bar{\BFDelta}+p^* \bar{u}+\pi^*
    \bar{\eta} \\
    \subto \quad &  \bar{\Delta}_{k} \geq \bar{u} f_{\BFell  k}^* -\bar{\eta} &&\forall k \in \K \\
  \quad & \bar{\Delta}_{k}\geq 0 &&\forall k \in \K \\
  \quad &\bar{u} \geq 0\\
 \quad &  \bar{\eta} \geq \bar{u} \zeta_{\mymbox{lb}}\\
\quad &   \bar{\eta} \leq \bar{u} \zeta_{\mymbox{ub}}\\
\quad &  \BFell \in \{0,1\}^{\NE}\\
\quad & \BFone^T \BFell=1,
\end{align}
\end{subequations}
where $\bar{\BFDelta} \in \Re^\NK, \bar{u}\in \Re, \bar{\eta} \in \Re$, and where $\BFvarphi^*$, $p^*$, and $\pi^*$ are the terms of the dual vector $\BFvarphi^*$ associated with constraints \eqref{eq:DRIP:RC:C1:n}, \eqref{eq:DRO_u_sum_one:relaxed}, and \eqref{RRLT:relaxed} respectively. 

Next, we can observe that when $\bar{u}=0$, problem \eqref{eq:proof:propIII:probA} necessarily evaluates to zero. From this we conclude that $\bar{u}>0$ can be added to problem \eqref{eq:proof:propIII:probA} without affecting the conclusion when used to check condition \eqref{eq:violatedConstraintCond}. Moreover, $\bar{u}$ can be pulled out of \eqref{eq:proof:propIII:probA} after replacing $\BFDelta := (1/\bar{u})\bar{\BFDelta}$ and $\eta := \bar{\eta}/\bar{u}$ to obtain:
\begin{subequations}\label{eq:proof:propIII:probB}
\begin{align}
   \mathop{\minimize_{\BFell \in \LL,\BFDelta, \eta}} \quad & \frac{{\BFvarphi^*}^T}{1-\alpha} \BFDelta+p^*+\pi^*
    \eta \\
    \subto \quad &  \Delta_{k} \geq f_{\BFell  k}^* -\eta &&\forall k \in \K \\
  \quad & \Delta_{k}\geq 0 &&\forall k \in \K \\
 \quad &  \eta \geq  \zeta_{\mymbox{lb}}\\
\quad &   \eta \leq \zeta_{\mymbox{ub}}\\
\quad &  \BFell \in \{0,1\}^{\NE}\\
\quad & \BFone^T \BFell=1,
\end{align}
\end{subequations}

Next, in order to solve the problem \eqref{eq:proof:propIII:probB}, we need to make explicit the relation between $\BFell$ and $f_{\BFell, k}$ for each scenario $k$. 
One way is to exploit the dual problem associated with problem \eqref{eq:Primal_flowP} which is given by 
 \begin{align*}
     f_{\BFell, k}=\quad \min_{\BFupsilon,\BFlambda} \quad&(\BFone-\BFell)^TC^k\BFlambda  \\
     \quad \subto\quad& \BFlambda+N^T\BFupsilon -\BFd \geq 0 \\
     &0\leq\BFlambda \leq 1,
 \end{align*}
where $\BFupsilon \in \Re^{\lvert V \rvert}$ and $\BFlambda \in \Re^{\NE}$ are duals associated with the  constraints \eqref{eq:flow_balance_Incidence}
and \eqref{eq:PrimFlow_Capacity} respectively. 
The dual vector $\BFlambda$ is bounded above by $1$ since a unit increase in capacity of an arc can increase the flow by at most one unit, see, \cite[Lemma 1]{Cormican1998}. 

We therefore have reduced the evaluation of the left-hand side of condition \eqref{eq:violatedConstraintCond} to solving the following non-linear 
mixed integer programming  (NL-MIP) problem 
\begin{subequations}
\begin{align}
    \mathop{\minimize_{\BFell \in \LL, \BFDelta, \eta}}_{\{\BFlambda_k, \BFupsilon_k\}_{k\in\K}} \quad & \frac{{\BFvarphi^*}^T}{1-\alpha} \BFDelta+p^*+\pi^*
    \eta \\
    \subto \quad &  \Delta_k \geq  \BFone^TC^k\BFlambda_k - \BFell^TC^k\BFlambda_k -\eta && \forall k \in \K\\
  \quad & \BFlambda_k+N^T\BFupsilon_k -\BFd \geq 0 && \forall k \in \K \label{sub_lambda:proof}\\
   \quad & 0\leq\BFlambda_k \leq 1 && \forall k \in \K\\
  \quad & \Delta_k \geq 0 && \forall k \in \K\\ 
 \quad &  \eta \geq \zeta_{\mymbox{lb}}\\
\quad &   \eta \leq \zeta_{\mymbox{ub}} \\
\quad &  \BFell \in \{0,1\}^{\NE}\\
\quad & \BFone^T \BFell \leq B. \label{budget:proof}
\end{align}
\end{subequations}
The non-linearity in the above problem is due to the bilinear terms $\BFell^TC_k\BFlambda_k=\BFc_k^T\diag(\BFell)\BFlambda_k$. We can  linearize them  to obtain an equivalent MILP since $\BFell$ is binary:
\begin{align*}
    \mathop{\minimize_{\BFell \in \LL, \BFDelta, \eta}}_{\{\BFlambda_k, \BFupsilon_k,\BFUpsilon_k\}_{k\in\K}} \quad & \frac{{\BFvarphi^*}^T}{1-\alpha} \BFDelta+p^* +\pi^*\eta \\
    \subto \quad &  \Delta_{k}\geq \BFc_k^T\BFlambda_k- \BFc_k^T\BFUpsilon_k -\eta &&\forall k \in \K\\
  \quad &  \BFUpsilon_k \leq \BFell &&k \in \K\\
  \quad &  \BFUpsilon_k \leq \BFlambda_k &&\forall k \in \K\\
   \quad & \BFUpsilon_k \geq \BFlambda_k+\BFell -1&&\forall k \in \K\\
  \quad &  \BFUpsilon_k  \geq \BFzero &&\forall k \in \K\\
  \quad & \eqref{sub_lambda:proof}-\eqref{budget:proof},
\end{align*}
where each $\BFUpsilon_k \in \Re^{\lvert E \rvert}$ is a linearization of $\diag(\BFell)\BFlambda_k$. 

  \section{Matrices}\label{sec:appen:matrix}
  The coefficient matrices in \eqref{Compact_RMP} are given by: 
  \begin{footnotesize}
\begin{align*}
&\BFh= \begin{blockarray}{(cc)}
\BFzero\\\BFzero\\\BFzero\\\BFzero\\\BFzero\\ 1
\end{blockarray},\;\quad  W_{\BFell} = \begin{blockarray}{(ccc)c} -\BFI & \BF{f}_{\BFell}& -\BFone & \eqref{eq:DRO_Delta:n}\\ 
-\BFI & \BFzero & \BFzero &\eqref{eq:DRO_Delta_positive:n}\\
0&-1&0&\eqref{eq:DRO_u_positive:n}\\0 & \zeta_{\mymbox{lb}} & -1 &\eqref{eq:mccor1:n}\\  0 & -\zeta_{\mymbox{ub}} & 1&\eqref{eq:mccor2:n}\end{blockarray},\\   
  & A= \begin{blockarray}{(cccccc)c}
\BFone\hat{\BFq}^T-\BFI& \BFone \BFone^T & \BFone \BFone^T & \BFone & \Gamma\BFone & -\BFone&\eqref{eq:DRIP:RC:C1:n}\\
    \diag(\bar{\BFq})& -\BFI & 0 &0& -\BFone & 0&\eqref{eq:chi_plus:n}\\
    -\diag(\bar{\BFq})& 0&-\BFI& 0 &-\BFone & 0&\eqref{eq:chi_minus:n}\\
     0 & 0 & 0 & 0 &0 &0&\eqref{eq:DRO_u_sum_one:relaxed}\\
     0 & 0 & 0 & 0 &0 & 0&\eqref{eq:DRO_u_sum_one:relaxed}\\
     0 & 0 & 0 & -1 &0 & 0&\eqref{RRLT:relaxed} \\
  0 & 0 & 0 & 1 &0 & 0&\eqref{RRLT:relaxed} \\
      0 & 0 & 0 & \BFone  & 0 &0 &\eqref{eq:mccor3:n}\\
 0 & 0 & 0 & -\BFone &0  & 0&\eqref{eq:mccor4:n}\\
      0 & 0 & 0 & 1&0 & 0 &\eqref{eq:DRO_zeta_bound:n}\\
 0 & 0 & 0 & -1 &0 & 0&\eqref{eq:DRO_zeta_bound:n}\\
 0 & -\BFI & 0 & 0 &0 & 0 &\eqref{eq:w:chi:n}\\
  0 & 0 & -\BFI & 0 &0 & 0 &\eqref{eq:w:chi:n}\\
 0 & 0 & 0 & 0 &-1 & 0&\eqref{eq:w:chi:n} \\
    \end{blockarray},\;
\;\; B_{\BFell}=\begin{blockarray}{(ccc)}
    \frac{1}{1-\alpha}\BFI& 0 & 0\\
     0 &  0 & 0\\
     0 &  0 & 0\\
     0 & 1 & 0\\
      0 & -1 & 0\\
  0 & 0& 1\\
      0 & 0 & -1\\
  0& \zeta_{\mymbox{ub}}\BFe_{\BFell} & -\BFe_{\BFell}\\
      0&-\zeta_{\mymbox{lb}}\BFe_{\BFell} & \BFe_{\BFell} \\
      0 & 0 & 0\\
      0&0 &0\\
      0 & 0 & 0\\
      0 & 0 & 0\\
        0 & 0 & 0\\
    \end{blockarray},\;\; \BFs= \begin{blockarray}{(ccc)}0\\
    0\\ 0 \\  1\\ -1\\ 0\\ 0\\\zeta_{\mymbox{ub}} \BFone\\ -\zeta_{\mymbox{lb}} \BFone\\ \zeta_{\mymbox{ub}}\\ -\zeta_{\mymbox{lb}}\\ 0\\ 0\\
    0\end{blockarray}.
\end{align*}
\end{footnotesize}
where for each $\BFell \in \LL$,  $\BF{f}_{\BFell} \in \Re^{\NK}$ is the vector of maximum flows for each scenario in $\NK$.

\section{ Solving for deterministic strategy}\label{appen:deterministic}
Since $\zeta$ is bounded above by $\bar{\zeta}$,  we can linearize the relation $\eta_\BFell=u_\BFell \zeta$ in \eqref{eq:DRIP:RC} to obtain the following MILP:
  \begin{align*}
\minimize_{\scriptsize \begin{array}{c}\BFu, t,\BFeta,\BFw,\BFw^-\\\chi, \BFbeta, \BFDelta, \zeta\end{array}} \quad & t\\ 
\subto \quad &  \zeta+\sum_{k'\in \K}w_{k'}+\sum_{k'\in \K}w_{k'}^-+\Gamma\chi \notag \\\qquad&+\sum_{k'\in\K} 
 \hat{q}_{k'}\beta_{k'}+\frac{1}{1-\alpha}\sum\limits_{\BFell \in \LL} \Delta_{\BFell,  k}-\beta_k \leq t && \forall k\in\K \,\\ 
 \quad & \chi\geq  \bar{q}_k\beta_k-w_k&& \forall k \in \K \\
    \quad & \chi\geq -\bar{q}_k\beta_k-w^-_k&&\forall k 
    \in \K \\
 \quad & \BFw\geq 0,\BFw^-\geq 0,\chi \geq 0\\
  \quad &   \Delta_{\BFell,  k} \geq u_\BFell f_{\BFell, k}-\eta_\BFell && \forall \BFell  \in \LL, k\in \K   \\
  \quad &   \Delta_{\BFell,  k} \geq 0 && \forall \BFell  \in \LL, k\in \K   \\
  \quad &   \eta_{\BFell} \geq 0 && \forall \BFell  \in \LL  \\
  \quad & \eta_{\BFell} \geq  \zeta  -(1-u_\BFell) \bar{\zeta}&& \forall \BFell  \in \LL \\
  \quad &   \eta_{\BFell}\leq \zeta  && \forall \BFell  \in \LL\\
  \quad &   \eta_{\BFell}\leq \bar{\zeta} u_\BFell  && \forall \BFell  \in \LL\\
\quad &  \BFu = \{0,1\}^{\lvert \LL \rvert }\\
\quad & \BFone^T \BFu=1.
  \end{align*}

\end{document}